\theoremstyle{plain}
\newtheorem{Thm}{Theorem}[section]
\newtheorem{Lem}[Thm]{Lemma}
\newcommand{\nc}{\newcommand}
\nc{\regpi}{\pi}
\nc{\U}{\mathcal{U}}
\nc{\V}{\mathcal{V}}
\nc{\tU}{\check{U}}
\nc{\tV}{\check{V}}
\nc{\tN}{\bar{N}}
\nc{\tY}{\wt{Y}}
\nc{\wt}{\widetilde}
\newcommand{\tX}{\bar{X}}
\newcommand{\te}{\tilde{e}}
\newcommand{\Ex}{\mathbb{E}}
\nc{\on}{\operatorname}
\nc{\rec}{\frac{1}}
\nc{\cond}{\, \bigl|\,}
\nc{\N}{\mathbb{N}}
\newcommand{\M}{\mathcal{M}}
\nc{\tP}{\wt{P}}
\renewcommand{\P}{\mathbb{P}}
\nc{\tq}{\mathring{q}}
\nc{\tZ}{\wt{Z}}
\nc{\iid}{i.i.d.\ }
\nc{\R}{\mathbb{R}}
\nc{\eu}{\EuScript}
\nc{\indic}{\mathbf{1}}
\begin{document}
\title{Derivatives of the Stochastic Growth Rate}
\author{David Steinsaltz\\Department of Statistics\\University of Oxford\\1 South Parks Road\\Oxford OX1 3TG\\United Kingdom\\steinsal@stats.ox.ac.uk
\and Shripad Tuljapurkar \\ Biology Department \\Stanford University \\Stanford ,CA 94305 \\ USA
\and Carol Horvitz \\Biology Department \\ University of Miami \\ P.O. Box 249118\\
Coral Gables, FL 33124\\USA}
%\keywords{stochastic matrix models; stage-structured population; demographic elasticity; Lyapunov exponent; stochastic growth rate}

\maketitle
\newpage
\abstract{We consider stochastic matrix models for population driven by random environments which form a Markov chain. The top Lyapunov exponent $a$, which describes the long-term growth rate, depends smoothly on the demographic parameters (represented as matrix entries) and on the parameters that define the stochastic matrix of the driving Markov chain. The derivatives of $a$ --- the ``stochastic elasticities'' --- with respect to changes in the demographic parameters were derived by \cite{tuljapurkar1990pdv}. These results are here extended to a formula for the derivatives with respect to changes in the Markov chain driving the environments. We supplement these formulas with rigorous bounds on computational estimation errors, and with rigorous derivations of both the new and the old formulas.}

\section{Introduction}
Stochastic matrix models for structured populations are widely used in evolutionary biology, demographic forecasting, ecology, and population viability analysis  (e.g., \cite{tuljapurkar1990pdv, lee1994spf, morris2002qcb, caswell2001mpm, lande2003spm}). In these models, a discrete-time stochastic process drives changes in environmental conditions that determine the population's stage-transition rates (survival, fertility, growth, regression and so on). Population dynamics are described by a product of randomly chosen population projection matrices. In most biological situations the population's stage structure converges to a time-varying but stable structure \cite{cohen1977erg2}, and in the long run the population grows at a stochastic growth rate $a$ that is not random and is the leading Lyapunov exponent of the random product of population projection matrices \cite{furstenberg1960pro, cohen1977erg2, lange1979css, lange1981ssp, tuljapurkar1980pdv}. This growth rate $a$ is of considerable biological interest, as a fitness measure for a stage-structured phenotype \cite{tuljapurkar1982pdv}, as a determinant of population viability and persistence \cite{tuljapurkar1980pdv, morris2002qcb, lande2003spm}, and in a variety of invasion problems in evolution and epidemiology \cite{metz1992sho}.

The map between environments and projection matrices describes how phenotypes change with environments, i.e., the phenotypic norm of response, and we are often interested in how populations respond to changes in, say, the mean or variance of the projection matrix elements. Such questions are answered by computing the derivatives of $a$ with respect to changes in the projection matrices, using a formula derived by \cite{tuljapurkar1990pdv}. \cite{tuljapurkar2003mgr} called these derivatives stochastic elasticities, to contrast with the elasticity of the dominant eigenvalue of a fixed projection matrix to the elements of that matrix \citep{caswell2001mpm}. Stochastic elasticity has been used to examine evolutionary questions \citep{haridas2005eve} and the effects of climate change \citep{morris2008lcb}. At the same time, $a$ is also a function of the stochastic process that drives environments. Many processes, such as climate change \citep{boyce2006div}, will result in changes in the frequencies of, or the probabilities of transition between, environmental states. William Morris (personal communication 2005) posed the question: how is $a$ affected by a change in the pattern and distribution of environments, rather than by a change in the population projection matrices? To answer his question, we consider a model in which the environment makes transitions among one of several discrete states, according to a Markov chain.  Then what we want is the derivative of $a$ with respect to changes in the transition probabilities of this Markov chain.  This derivative exists (at least away from the boundaries of the space of stochastic matrices), and in fact we know from \cite{peres1992adl} that $a$ is an analytic function of the parameters of both the projection matrices and the parameters defining the stochastic matrix, in an open neighborhood of the set of stochastic matrices. In deterministic models \citep{caswell2001mpm}, the growth rate is represented as $\lambda = e^r$; then sensitivities are derivatives of the form $(\partial \lambda/\partial x)$ with respect to a parameter $x$ whereas elasticities are proportional derivatives of the form $(\partial r/\partial \log x)$. In stochastic models we compute derivatives of $a$, and these can be used to compute elasticities (as in \cite{tuljapurkar2003mgr}) or sensitivities.

Our first contribution here is a new formula for computing the derivative of $a$ with respect to changes in the transition probabilities of the environmental Markov chain. To obtain this result we show how an initial environmental state affects future growth, using coupling and importance sampling; this analysis may be of independent interest. Even with a formula in hand we must compute derivatives of $a$ by numerical simulation which is subject to both sampling (Monte Carlo) error and bias. Our second contribution here is to show how one can bound these estimation errors. Our third contribution is a rigorous proof of the heuristically derived formula given by \cite{tuljapurkar1990pdv} for the derivatives of $a$ to the elements of the population projection matrices.

In Section 2 of this paper we set out the model and assumptions, the approach to finding derivatives, along with necessary facts about the convergence of population structures and distributions. In Section 3 we discuss systematic and sampling errors and show how we can bound them. We illustrate this approach in Section 4 by presenting bounds (in Theorem 1) for simulation estimates of the stochastic growth rate $a$ and (in Theorem 2) for the derivatives of $a$ with respect to projection matrix elements. In Section 5 we define a measure of the effect of an initial environmental state on subsequent population growth and show how to estimate this measure using coupling arguments. Section 6 presents (in Theorem 4) the formula, algorithm, and error bounds for the derivative of $a$ with respect to the elements of the Markov chain that drives environments. We end by discussing how these theorems can be applied and some related issues concerning parameter estimation in such models. Proofs are in the Appendix.

\section{Model, Convergence and Stationary Distributions} \label{sec:obs}

We consider a population whose individuals exist in $K$ different stages (these may be, for example, ages, developmental stages or size classes). Newborns are in stage $i=1$. The progression between stages occurs at discrete time intervals at rates that depend on the environment in each time interval. The environment $e_t$ in period $t$ is in one of $M$ possible states; we denote the set of possible environments by $\M=\{1,\dots,M\}$. Individuals in stage $i$ at time $t$ move to stage $j$ at a rate $X_{e_{t+1}}(j,i)$. These rates are elements of a nonnegative population projection matrix, and at time $t$ when the environment is $e_t$ this matrix is denoted by $X_{e_t}$; there are $M$ such matrices, one for each environmental state.  We assume that allocation of individuals to classes and the identification of environment states are certain. We also assume that the total number of individuals in the population is large enough that we can ignore sampling variation. Successive environments are chosen according to a Markov process with transition matrix $P$ whose elements are $P(e,e')$ and whose stationary distribution is $\nu = \{\nu(e))\}$. We follow the standard convention for Markov chains, that $P(e,e')$ represents the probability of a transition from state $e$ to state $e'$; note that this is the opposite of the convention used in matrix population models. In some places we specialize to the case when the environments are \iid (independent and identically distributed), with distribution $\nu$.

To guarantee demographic weak ergodicity (Cohen 1977) we assume that
\begin{enumerate}
\item Each row of each population projection matrix has at least one positive entry.
\item There exists some $R>0$ such that any product $X_{e_{1}}\cdots X_{e_{R}}$ has all entries positive.
\item In the case of \iid distributions, all $\nu_{e}$ are positive. In the Markov case, the chain is assumed to be ergodic (so transitive and aperiodic), and environments are in the stationary distribution  of $P$, which will also be denoted by $\nu(e)$.
    \end{enumerate}

The population in year $t$ is represented by a vector $N_{t}\in \R_{+}^{K}$ with $N_t =\{(n_t(1),\dots,n_t(k))^{T}:\, n_t(i)> 0\}$. The superscript $T$ will always mean transpose; here it indicates that population vectors are column vectors. (There may be population classes early on that have 0 members; condition (ii) above forces all classes eventually to have positive membership, and so we assume without loss of generality that we start with all population classes occupied.) The population structure changes according to $N_{t+1}=X_{e_{t+1}}N_{t}$, and
\begin{equation} \label{E:basiceq}
N_{t}=X_{e_{t}}X_{e_{t-1}}\cdots X_{e_{1}}N_{0}.
\end{equation}
The normalized population structure $\tN_{t}:=N_{t}/(\sum_{i} N_{t}(i))$ does not converge to a fixed limit (as it would if the environment were constant) but it does converge in distribution. The long-run growth rate is not random and for each stage $i$,
$$
a:=\lim_{t\to\infty} t^{-1}\log N_{t}(i)=\lim_{t\to\infty} t^{-1} \log\frac{\sum_{i} N_{t}(i)}{\sum_{i} N_{0}(i)}
$$
exists, and is the same in every realization. This $a$ is called the ``stochastic growth rate''.

Counting the $K^{2}$ parameters in each matrix, there are at most $\left(M K^{2}+M^{2}\right)$ parameters. While all parameters must be nonnegative, and all elements of the population projection matrices but the birth rates must be $\le 1$, the only universal constraint is $\sum_{e\in\M}P(e',e)=1$. (There may, however, be further constraints imposed, as some transitions may be impossible. If we are considering age-structured populations, the matrices are Leslie matrices, each with only $2K-1$ potentially nonzero parameters.) The sensitivities we examine are derivatives of $a$ with respect to these parameters. When evaluating sensitivities we work in an explicit basis in which perturbations are described by an appropriate matrix and we refer to change ``in the direction of'' that matrix. This will be made precise in the analyses that follow.

\subsection{Convergence}  \label{sec:convergence}
We denote the $K$-dimensional column vector with 1's in all places by $\indic$. By default we use the $L^{1}$ norm $\|x\|:= \sum|x_{i}|$ when $x$ is a vector in $\R_{+}^{K}$, and write $\|X\|:=\|X\indic\|=\sum_{i,j=1}^{K}X_{i,j}$ when $X$ is a $K\times K$ matrix. Our assumptions imply that there are positive constants $\hat{k}$ and $\hat{r}$, such that for any environments $e_{1},\dots,e_{m}$,
\begin{equation} \label{E:normbound2}
\bigl| \log \|X_{e_{m}}\cdots X_{e_{1}}\|\bigr| \le \hat{k}+m\hat{r}
\end{equation}

We use the Hilbert projective metric $\rho$, described in \cite{golubitsky1975cas} and defined by
\begin{equation} \label{E:defrho}
\rho(x,y):=\log \max_{1\le i\le K} x(i)/y(i)+ \log \max_{1\le i\le K} y(i)/x(i).
\end{equation}
This is a pseudometric on $\R_{+}^{K}$ that is a metric on $\eu{S}:=\{(x(1),\dots,x(k))^{T}:\, x(i)> 0$ and $\sum x(i)=1\}$. The distance between two vectors is defined by the ray from the origin; that is, $\rho(x,y)=\rho(x/\|x\|,y/\|y\|)$. It has been shown by \cite{bushell1973hsm} that
$$
\frac{1}{2}\Bigl[\min\{x(i)\}+\min\{y(i)\}\Bigr] e^{-\rho(x,y)}\,\rho(x,y)\le \|x-y\|\le e^{\rho(x,y)}-1
$$
for any $x,y\in \eu{S}:=\{(x(1),\dots,x(k))^{T}:\, x(i)> 0$ and $\sum x(i)=1\}$. (The bound is actually shown with respect to the Euclidean norm, but the same argument holds for any $L^{p}$ norm.)  Thus, convergence in the projective metric implies convergence of the projections onto  $\eu{S}$ in the standard norms.

Following Lemma 1 of \cite{lange1979css}, we define a compact convex subset $\U\subset\eu{S}$ which is stable under the transformations $u\to X_{e}u/\|X_{e}u\|$ for any $e\in\M$ and includes the vector $\indic/K$ as well as all vectors of the form $X_{e_{R}}\cdots X_{e_{1}} y/\|X_{e_{R}}\cdots X_{e_{1}} y\|$ where $y\in \eu{S}$; and a compact convex subset $\V\subset\eu{S}^T$ which is stable under the transformations $v^T \to v^T X_{e}/\|v^T X_{e}\|$ and includes the vector $\indic^{T}/K$ as well as all vectors of the form $y^T X_{e_{R}}\cdots X_{e_{1}}/\|y^T X_{e_{R}}\cdots X_{e_{1}}\|$, where $y^T\in \eu{S}^T$. A useful fact about this metric follows: if $u,u'\in\eu{S}$, and $v^T$ any positive row vector, then
$$
\log \min\left\{\frac{u(i)}{u'(i)}\right\} \le\log \frac{v^T u}{ v^T u'}\le \log \max\left\{\frac{u(i)}{u'(i)}\right\}.
$$
Since $u,u'\in \eu{S}$, it follows that the left-hand side is $\le 0$ and the right-hand side $\ge 0$, so
\begin{equation} \label{E:ratiobound}
\bigl|\log v^T u-\log v^T u'\bigr| \le  \rho(u,u').
\end{equation}

From \cite[Theorem 2]{lange1981ssp} we know that there exist constants $k_{1},k_{2},r$, with $0<r<1$, such that for any $u,u'\in\U$ and environments $e_{1},\dots,e_{m}$,
\begin{equation} \label{E:basicconbound}
\rho\left(X_{e_{m}}\cdots X_{e_{1}}u,X_{e_{m}}\cdots X_{e_{1}}u'\right)\le k_{1}r^{m}\rho(u,u')\le k_{2} r^{m}.
\end{equation}
Of course, the constants may be chosen so that the same relation holds for the transposed matrices, with $u^T,(u')^T\in \V$.

It follows (as in Lemma 2 of \cite{lange1981ssp}) that for any $u,u'\in\U$ and environments $e,e_{1},\dots,e_{m}$,
\begin{equation} \label{E:rhobound}
\begin{split}
\left|\log\frac{\|X_{e}X_{e_{m}}\cdots X_{e_{1}}u\|}{\|X_{e_{m}}\cdots X_{e_{1}}u\|}-\log\frac{\|X_{e}X_{e_{m}}\cdots X_{e_{1}}u'\|}{\|X_{e_{m}}\cdots X_{e_{1}}u'\|}\right| &\le k_{1} r^{m} \rho(u,u')\\
&\le k_{2} r^{m}.
\end{split}
\end{equation}
The same relation holds when the matrices $X$ are replaced by their transposes, with $u^T,(u')^T\in \V$.

Since $\|X\|=\|X\indic\|$, and $\indic/K$ is in $\U$, it immediately follows that if $e'_{1},\dots,e'_{i}$ are any other environments,
\begin{equation} \label{E:normbound}
\left|\log\frac{\|X_{e}X_{e_{m}}\cdots X_{e_{1}}\|}{\|X_{e_{m}}\cdots X_{e_{1}}\|}-\log\frac{\|X_{e}X_{e_{m}}\cdots X_{e_{i+1}}X_{e'_{i}}\cdots X_{e'_{1}}\|}{\|X_{e_{m}}\cdots X_{e_{i+1}}X_{e'_{i}}\cdots X_{e'_{1}}\|}\right| \le k_{2} r^{m-i}.
\end{equation}

We note that the results in \cite{lange1981ssp} depend only on the set of matrices $X$ being compact, not on it being finite. In section \ref{sec:results} and beyond we will be letting the matrices $X_{e}$ and/or the transition matrix $P$ depend smoothly on a parameter $x$, which will take values either in $[-x_{0},x_{0}]$ or $[0,x_{0}]$. We may then choose the sets $\U$ and $\V$ and constants $k_{1},k_{2},r$ such that the properties above --- in particular, the stability of $\U$ and $\V$ and the bounds \eqref{E:basicconbound} and \eqref{E:rhobound} --- hold simultaneously for all values of the parameter.

\begin{comment}
\nc{\tu}{\tilde{u}}
\begin{proof}
Let $\tu:=X_{e_{m}}\cdots X_{e_{1}}u/\|X_{e_{m}}\cdots X_{e_{1}}u\|$ and $\tu':=X_{e_{m}}\cdots X_{e_{1}}u'/\|X_{e_{m}}\cdots X_{e_{1}}u'\|$. Then since $\|\tu\|=\|\tu'\|=1$,
\begin{align*}
\frac{\|X_{e} \tu\|}{\|X_{e} \tu'\|}&\le \max_{\ell}\frac{\tu_{\ell}}{\tu'_{\ell}}\\
&\le  \max_{\ell}\frac{\tu_{\ell}}{\tu'_{\ell}}  \max_{\ell'}\frac{\tu'_{\ell'}}{\tu_{\ell'}}\\
&=e^{\rho(\tu,\tu')}.
\end{align*}
Since $|\log x/y|\le \max\{x/y,y/x\}-1$, this combines with \eqref{E:basicconbound} to yield the first inequality of \eqref{E:rhobound}. The second inequality follows from the fact that $\U$ is compact. Since $\|X\|=\|X\indic\|$, \eqref{E:normbound} follows from applying \eqref{E:rhobound} to the vectors
\begin{align*}
u&=\frac{X_{e_{\max\{i,R\}}} \cdots X_{1} \indic}{\|X_{e_{\max\{i,R\}}} \cdots X_{1} \indic\|}\text{ and} \\
u'&=\frac{X_{e_{(R-i)_{+}}} \cdots X_{e_{1}} X_{e'_{i}}\cdots X_{e'_{1}} \indic}{{X_{e_{(R-i)_{+}}} \cdots X_{e_{1}} X_{e'_{i}}\cdots X_{e'_{1}} \indic},
\end{align*}
since the product of any $R$ matrices applied to $\indic$ must be in $\U$.
\end{proof}
\end{comment}

\subsection{Time reversal and the stationary distribution} \label{sec:timerev}
The transition matrix for the {\it time-reversal} of $P$ will be denoted $\tP$, and is given by
$$
\tP(e,e'):= \frac{\nu(e')}{\nu(e)} P(e',e).
$$
A standard result (for example, see Theorem 6.5.1 of \cite{grimmett2001par}) tells us that if $e_{1},e_{2},\dots, e_m$ form a stationary Markov chain with transition matrix $P$, for a fixed $m$, the reversed sequence $e_{m},e_{m-1},\dots,e_{1}$ is a Markov chain with transition matrix $\tP$.

As described in \cite{lange1981ssp}, if $e_{1},e_{2},\dots$ forms a stationary Markov chain with transition probabilities $P$, there is a unique distribution $\regpi$ on $\U\times \M$ which is stable under the transformation $(u,e_{i})\mapsto (X_{e_{i}}u/\|X_{e_{i}}u\|,e_{i+1})$. That is, if the normalized population structure $\tN_{t}$ paired with $e_{t+1}$ is chosen from the distribution $\regpi$, then the pair $(\tN_{t+1},e_{t+2})$ will also be in the distribution $\regpi$. Furthermore,
\begin{enumerate}
\item For any initial population distribution $\tN_{0}$, and any initial environment $e_{0}$, the random pair $(\tN_{t},e_{t+1})$ converges in distribution to $\regpi$.
\item For a population distribution $u_{0}\in\U$, choose any random sequence of environments $e_{0},e_{1},\dots$. We may define a sequence of random vectors $U_{t}:=X_{e_{1}}\cdots X_{e_{t}}u_{0}/\|X_{e_{1}}\cdots X_{e_{t}}u_{0}\|$. Then $U_{t}$ converges pointwise to a random vector $U_{\infty}:=\lim_{t\to\infty} U_{t}$. If we identify $u_{0}$ with $\tN_{0}$, let the sequence of environments be realized from the time-reversed chain $\tP$, and $e_{0}$ from the stationary distribution $\nu$, then $(U_{t},e_{0})$ has the same distribution as $(\tN_{t},e_{t+1})$. Hence, the distribution of $(U_{\infty},e_{0})$ is $\regpi$. Furthermore, if $u_{0}\in \U$, we obtain directly from \eqref{E:basicconbound}
\begin{equation} \label{E:limbound}
\rho(U_{\infty},U_{t})\le k_{2} r^{t}.
\end{equation}
\end{enumerate}

The same holds true, of course, if we reverse the matrix multiplication: Starting from any nonnegative $M$-dimensional row vector $v^T_{0}$, we define from a sequence of environments $e_{0},e_{1},\dots$ the sequence of row vectors $V^T_t:=v^T_{0}X_{e_{t}}\cdots X_{e_{1}}/\|v^T_{0}X_{e_{t}}\cdots X_{e_{1}}\|\in \V$. Then $V_{t}$ converges pointwise to a random vector $V_{\infty}:=\lim_{t\to\infty} V_{t}$. When the sequence of environments has been chosen from the chain $P$, we denote the distribution of $(V_{\infty},e_{0})$ by $\tilde{\pi}$. As before, $\tilde{\pi}$ is the stationary distribution for the Markov chain on $\V\times\M$, defined by taking $(v^T,e_{t})$ at time $t$ to $(v^T X_{e_{t}}/{\|v^T X_{e_t}\|},e_{t+1})$ at time $t+1$, where the environments $(e_{0},e_{1},\dots)$ are drawn from the backward chain.

We  also define the regular conditional distributions $\regpi_{e}$ on $\U$ as follows: pick $(U,\mathbf{e})$ from the distribution $\regpi$, conditioned on $\mathbf{e}$ being $e$, and take $\regpi_{e}$ to be the distribution of $U$. Similarly we define $\tilde{\pi}_{e}$.
In the case of \iid environments, of course, $\regpi$ and $\tilde{\pi}$ are simply products of an independent population vector and environment; the environment has distribution $\nu$, and the stationary population distributions we  also denote (by an abuse of notation) by $\regpi$ and $\tilde{\pi}$.

\subsection{Estimating contraction rates}  \label{sec:concon}
The constants $r$ and $k_{1}$, defined in \eqref{E:basicconbound}, are crucial to the analysis at several stages. We describe here how to obtain plug-in estimates for these quantities. This is by no means the most efficient algorithm, nor does it obtain the best bounds, but it should be feasible for problems of moderate size. To begin, we let $Y_{1},\dots,Y_{S}$ be the collection of all products of the form $X_{e_{1}}X_{e_{2}}\cdots X_{e_{R}}$; here $S=M^{R}$. Following \cite{bushell1973hsm} we define
\begin{equation} \label{E:delta}
\Delta(Y):=\sup \bigl\{ \rho(Yx,Yx') : x,x'\in \on{Int}(\R_{+}^{K})\bigr\}.
\end{equation}
Since for any positive vectors $u,u',u''$ we have $\rho(u+u',u'')\le \max\{\rho(u,u''),\rho(u',u'')\}$, the maximum of $\rho(u,u')$ among vectors in a cone is taken between extreme points of the cone, we can compute $\Delta(Y)=\sup \bigl\{ \rho(Y^{(i)},Y^{(j)})\bigr\}$ where $Y^{(i)}$ and $Y^{(j)}$ are two columns of the matrix $Y$. By Theorem 3.2 of \cite{bushell1973hsm} it follows that if we let $r_{0}:=\max_{1\le i\le S} \tanh \Delta(Y_{i})/4$ (which is $<1$), then for all $x,x'\in \on{Int}(\R_{+}^{K})$, and any $e_{1},\dots,e_{R}\in\M$,
$$
\rho(X_{e_{1}}\cdots X_{e_{R}}u,X_{e_{1}}\cdots X_{e_{R}}u')\le r_{0}\rho(u,u').
$$
Thus, \eqref{E:basicconbound} holds with $r=r_{0}^{1/R}$ and $k_{1}=r^{1-R}$. Since we want the bounds to hold for the reversed products as well, we repeat these computations with $(X_{e})$ replaced by $(X_{e}^{T})$, and finally adopt the larger values of $r$ and $k_{1}$.

\section{Errors and How to Bound Them} \label{sec:main}
A standard approach to estimating $a$, and the derivatives that we give later on, is to choose a fixed starting vector $u_{0}\in \U$, simulate sequences $e_{0}(i),e_{1}(i),\dots,e_{m}(i)$ independently from the stationary Markov chain with transition probabilities $P$ ($i=1,\dots,J$), and then compute
\begin{equation} \label{E:approx}
\begin{split}
a_{m}&:= \Ex\left[ \log \frac{\|X_{e_{m}(i)}X_{e_{m-1}(i)}\cdots X_{e_{0}(i)} u_{0}\|}{\|X_{e_{m-1}(i)}\cdots X_{e_{0}(i)} u_{0}\|}\right]\\
&\approx\rec{J} \sum_{i=1}^{J} \Bigl(\log \|X_{e_{m}(i)}X_{e_{m-1}(i)}\cdots X_{e_{0}(i)} u_{0}\| \\
&\hspace*{4cm}- \log\|X_{e_{m-1}(i)}\cdots X_{e_{0}(i)} u_{0}\| \Bigr).
\end{split}
\end{equation}
It is important not only to know what would be an appropriate approximation to $a$ or its derivatives in the sense of being asymptotically correct, but also to have rigorous bounds for the error arising from any finite simulation procedure, such as \eqref{E:approx}. There are two sources of error: systematic error, arising from the fact that $X_{e_0}, u_{0}$ is not exactly a sample from the distribution $\regpi$; and sampling error, arising from the fact that we have estimated the expectation by averaging over a random sample.

\subsection{Systematic error} \label{sec:system}
By ``systematic error'' we mean the error in our estimate of $a$ arising from the difference between the distribution we are aiming for and the distribution we are actually sampling from. The quantity we are trying to estimate may be represented as $a=\pi[F]$, expectation of a certain function $F$ with respect to the distribution $\pi$. If we can simulate $Z_{1},\dots,Z_{J}$ from $\pi$, then $\hat a_{J}:=J^{-1}\sum_{j=1}^{J} F(Z_{j})$ is an unbiased estimator of $\pi[F]$, and will be consistent under modest assumptions on $F$ and the independence of the samples. Suppose, though that what we have are not samples from $\tilde{\pi}$, but samples $Z'_{j}$ from a ``similar'' distribution $\pi'$. Then we can bound the error by
\begin{equation} \label{E:bup0}
|a-\hat a|\le \left| J^{-1}\sum_{j=1}^{J} F(Z'_{j})-\pi'[F] \right| + \Bigl|\pi[F]-\pi'[F] \Bigr|.
\end{equation}
Here the first term on the right-hand side is the sampling error, and the second term is the bias, the expected value of systematic error. The problem is that the bounds we can obtain for the bias are likely to be crude, absent good computational tools for the distribution $\pi$ (and if we could compute analytically from $\pi$, we wouldn't need to be simulating). 

Alternatively, if we can couple the samples $Z'_{j}$ from the approximate distribution $\pi'$ to exact samples $X_{j}$ from the distribution $\pi$, we can break up the error in a slightly different way:
\begin{equation} \label{E:bup}
\begin{split}
|a&-\hat a|\\
&\le \Bigl| J^{-1}\sum_{j=1}^{J} F(Z'_{j})-\tilde{\pi}[F] \Bigr| + \Bigl|J^{-1}\sum_{j=1}^{J} F(Z_{j}) - J^{-1}\sum_{j=1}^{J} F(Z'_{j}) \Bigr|\\
&\le \Bigl| J^{-1}\sum_{j=1}^{J} F(Z'_{j})-\tilde{\pi}[F] \Bigr| + J^{-1}\sum_{j=1}^{J} \Bigl|F(Z_{j}) - F(Z'_{j}) \Bigr|\\
&\le \Bigl| J^{-1}\sum_{j=1}^{J} F(Z'_{j})-\tilde{\pi}[F] \Bigr| + J^{-1}\sum_{j=1}^{J} \underset{x}{\on{ess}\sup}\Bigr\{\Bigl|F(Z_{j}) - F(Z'_{j}) \Bigr|\Bigr\},
\end{split}
\end{equation}
where the essential supremum in the last line is taken over the distribution of $X'_{j}$ conditioned on $X_{j}$. Bounds for the sampling error in \eqref{E:bup0} will generally also be bounds for the first term in \eqref{E:bup}. The second term in \eqref{E:bup}, on the other hand, which takes the place of the bias, is a random variable, computed from the samples $X_{j}$. Its expectation is still a bound on the bias. The crucial fact is that the last line may be computable without knowing in detail what the ``correct'' sample $Z_{j}$ is.

A small disadvantage of this approach is that the systematic error varies with the sample. To achieve a particular fixed error bound we need an adaptive approach, whereby we successively extend our sequence of matrices until the error crosses the desired threshold. In keeping with our comment in section \ref{sec:marcoup}, we note here that this approach to estimating the systematic error in simulations is essentially just a version of the Propp-Wilson algorithm.

\subsection{Sampling error} \label{sec:sampling}
The sampling error is difficult to control with current techniques, because the distribution of the samples is so poorly understood --- the very reason why we resort to the Monte Carlo approximation in the first place. The best we can do for a rigorous bound is to use Hoeffding's inequality (see \cite{hoeffding1963pis}), taking advantage of crude bounds on the terms in the expectation. Hoeffding's inequality tells us that if $X_{1},\dots,X_{J}$ are \iid random variables such that $\alpha\le X_{i}\le \beta$ almost surely, then for any $z>0$,
\begin{equation} \label{E:hoeffding}
\P\left\{ \left|\rec{J}\sum X_{i} - \Ex[X] \right|>z\right\}\le 2e^{-2Jz^{2}/(\beta-\alpha)^{2}}.
\end{equation}
This is essentially the same bound that we would estimate from the normal approximation if the standard deviation of $X$ were $(\beta-\alpha)/2$. Of course, the standard deviation will be smaller than this, but we do not know how much smaller. An alternative approach then would be to use the bound $2\tau(z\sqrt{J}/\hat\sigma)$, where $\hat\sigma$ is the standard deviation of the simulated samples, and $\tau$ is the cumulative distribution function of the Student t distribution with $J-1$ degrees of freedom. This will be a smaller bound, in that sense ``better'', but not precisely true for finite samples, to the extent that the sample distribution is not normal. Generally we will want to fix $p_{0}$, the confidence level, and compute the corresponding $z$, which will be
\begin{equation} \label{E:hoeffding2}
z_{0}= (\beta-\alpha)\sqrt{-\rec{2J}\log (p_{0}/2)}.
\end{equation}
The corresponding expression for the estimate based on the t-distribution is
\begin{equation} \label{E:T2}
z_{0}=\frac{\hat\sigma}{\sqrt{J}} t_{1-p_{0}/2}(J-1),
\end{equation}
where $t_{p}(J-1)$ is the $p$ quantile of the Student T distribution with $J-1$ degrees of freedom; that is, if $T$ has this distribution then $P\{T>t_{p}(J-1)\}=p$.

\section{Growth Rate and Sensitivity to Projection Matrices} \label{sec:results}
We present here extensions of two known results. In these cases (and in later results) we start by defining an estimator that converges to the the quantity we desire, and follow that by bounds on the systematic and sampling errors, as well as an error bound for estimates from a simulation estimator. We  state our results on error bounds in the form ``The quantity $Q$ may be approximated by the expectation of $A$, with systematic error bounded by $B$ and sampling error bounded by $C(J,p)$.'' This means that if $A_{1},\dots,A_{J}$ are independent realizations of $A$, then the probability that the true value of $Q$ is not in the interval $J^{-1}\sum A_{i}\pm [B+C(J,p)]$ is no bigger than $p$. When describing an adaptive bound on the systematic error, $B$ will depend upon the particular simulation result. Again, the sampling error may be bounded either by a universally valid Hoeffding bound, based on known upper bounds on the samples, or by the t distribution using the standard deviation estimated from the sample, which provides a generally much superior bound, but which can only be treated as an approximation.

\subsection{Computing $a$}
The stochastic growth rate $a$ is commonly estimated by numerical simulation but, as discussed with examples by Caswell (2001), there is no general way to bound the errors in the estimated values. The following result provides suitable bounds.

\begin{Thm}\label{T:computeLE}
Let $u_{0}$ be any fixed element of $\,\U$, and $Y_{m}:=X_{e_{m}}X_{e_{m-1}}\cdots X_{e_{1}}$, where $e_{0},e_{1},\dots$ form a Markov chain with transition rates $P$. The stochastic growth rate may be approximated by the simulated expectation of
\begin{equation} \label{E:approxLE}
\log \frac{\|X_{e_{m+1}}Y_{m} u_{0}\|}{\|Y_{m} u_{0}\|},
\end{equation}
with systematic error bounded by $k_{2}r^{m}$ and sampling error at level $p$ on $J$ samples bounded by
\begin{equation} \label{E:samperrLE}
\left(\log \frac{\sup_{u\in\U} \max_{e\in\M} \|X_{e}u\|}{\inf_{u\in\U} \min_{e\in\M} \|X_{e}u\|} \right)\left(\frac{-\log p}{2J}\right)^{1/2}.
\end{equation}

When the simulated expectation is
$$
\rec{J} \sum_{i=1}^{J}\log \frac{\|X_{e_{0}}(i)Y_{m}(i) u_{0}\|}{\|Y_{m}(i) u_{0}\|}
$$
we may also bound the systematic error by
\begin{equation} \label{E:syserrLE}
\rec{J} \sum_{i=1}^{J} \sup_{u,u'\in\U} \rho\bigl(Y_{m}(i)u,Y_{m}(i)u'\bigr)\le \rec{J}\sum_{i=1}^{J} \Delta\bigl( Y_{m}(i)\bigr),
\end{equation}
where $\Delta$ is defined as in \eqref{E:delta}.
\end{Thm}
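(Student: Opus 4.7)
The plan is to reduce everything to an expectation under the stationary distribution $\regpi$ of Section~\ref{sec:timerev} and then invoke coupling plus Hoeffding. Writing $\log\|N_{t}\|-\log\|N_{0}\|=\sum_{s=1}^{t}\log\|X_{e_{s}}\tN_{s-1}\|$ and dividing by $t$, the ergodic theorem applied to the stationary chain on $\U\times\M$ from Section~\ref{sec:timerev} identifies
$$a=\regpi[F],\qquad F(u,e):=\log\|X_{e}u\|.$$
The simulated summand $\log(\|X_{e_{m+1}}Y_{m}u_{0}\|/\|Y_{m}u_{0}\|)$ is exactly $F(Y_{m}u_{0}/\|Y_{m}u_{0}\|,e_{m+1})$, which would integrate to $a$ if its argument were $\regpi$-distributed; what remains is to quantify both the distributional gap and the Monte Carlo deviation.

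For the $k_{2}r^{m}$ systematic bound, I would construct an idealized process driven by the very same environments $e_{1},\dots,e_{m+1}$ but started from $u^{\ast}\in\U$ with $(u^{\ast},e_{1})\sim\regpi$; since the $e$-marginal of $\regpi$ is $\nu$, this is consistent with the simulation's distribution of $e_{1}$. Invariance of $\regpi$ under the forward one-step map gives $(Y_{m}u^{\ast}/\|Y_{m}u^{\ast}\|,e_{m+1})\sim\regpi$, hence $\Ex[F(Y_{m}u^{\ast}/\|Y_{m}u^{\ast}\|,e_{m+1})]=a$. Applying \eqref{E:rhobound} pointwise along the shared environment sequence to the pair $u_{0},u^{\ast}\in\U$ yields
$$\left|\log\frac{\|X_{e_{m+1}}Y_{m}u_{0}\|}{\|Y_{m}u_{0}\|}-\log\frac{\|X_{e_{m+1}}Y_{m}u^{\ast}\|}{\|Y_{m}u^{\ast}\|}\right|\le k_{2}r^{m},$$
and taking expectations delivers the systematic error bound.

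For sampling error, the Monte Carlo summands lie in $[\alpha,\beta]$ with
$$\alpha=\log\inf_{u\in\U,\,e\in\M}\|X_{e}u\|,\qquad \beta=\log\sup_{u\in\U,\,e\in\M}\|X_{e}u\|,$$
both finite by compactness of $\U$ and strict positivity of $X_{e}u$ on $\U$. Since $\beta-\alpha$ equals the logarithm appearing in \eqref{E:samperrLE}, Hoeffding's inequality \eqref{E:hoeffding2} yields the stated confidence bound on $J$ independent replicates.

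For the adaptive bound \eqref{E:syserrLE}, condition on a realized product $Y_{m}(i)$ and next environment $e_{m+1}(i)$. An elementary computation, which is the core of \eqref{E:rhobound}, gives $|\log(\|X_{e}\tilde u\|/\|X_{e}\tilde u'\|)|\le\rho(\tilde u,\tilde u')$ for $\tilde u,\tilde u'\in\eu{S}$; combined with the scale invariance of $\rho$ this produces
$$\left|\log\frac{\|X_{e_{m+1}(i)}Y_{m}(i)u\|}{\|Y_{m}(i)u\|}-\log\frac{\|X_{e_{m+1}(i)}Y_{m}(i)u'\|}{\|Y_{m}(i)u'\|}\right|\le\rho\bigl(Y_{m}(i)u,Y_{m}(i)u'\bigr).$$
Taking $\sup_{u,u'\in\U}$ and averaging over $i$ gives the first inequality of \eqref{E:syserrLE}; the second is immediate from definition \eqref{E:delta}. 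The main obstacle is the coupling used for the systematic-error step: one must arrange $u^{\ast}$ to have the correct conditional law given $e_{1}$ while reusing the simulated trajectory $e_{1},\dots,e_{m+1}$. This is routine given the explicit construction of $\regpi$ in Section~\ref{sec:timerev}, but it is the step that bridges the ergodic identity $a=\regpi[F]$ to the purely deterministic contraction bound \eqref{E:rhobound}.
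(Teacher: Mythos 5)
Your argument matches the paper's own proof essentially step for step: both identify $a=\regpi[F]$ with $F(u,e)=\log\|X_{e}u\|$, couple the fixed start $u_{0}$ to a $\regpi$-conditional random start that shares the simulated environment sequence, bound the resulting bias by the contraction inequality \eqref{E:rhobound}, and then invoke Hoeffding \eqref{E:hoeffding2} for the Monte Carlo term and the per-realization version of \eqref{E:rhobound} (i.e., $|\log\|X_{e}\tilde u\|-\log\|X_{e}\tilde u'\||\le\rho(\tilde u,\tilde u')$ together with scale invariance) for the adaptive bound \eqref{E:syserrLE}. The only difference is cosmetic: you pair the idealized start with $e_{1}$ where the paper writes $\regpi_{e_{0}}$ against $Y_{m}=X_{e_{m}}\cdots X_{e_{1}}$; your indexing is in fact the internally consistent one given the convention that $\regpi$ pairs the current normalized state with the next environment, and the paper itself is not fully consistent (compare $X_{e_{m+1}}$ in \eqref{E:approxLE} against $X_{e_{0}}(i)$ in the adaptive-bound display), so your choice does not constitute a deviation in substance.
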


\nc{\plx}[1]{\frac{\partial a}{\partial #1}}

\subsection{Derivatives with respect to Projection Matrices}
We need care in defining derivatives of $a$ with respect to elements of the population projection matrices. As discussed in Tuljapurkar and Horvitz (2003) we must define how the matrix entries change, e.g., do we change fertility rates in a particular environment, or in all possible environments? Although the main formula here is known, Tuljapurkar's (1990) derivation did not justify a crucial exchange of limits (between taking the perturbation to zero and time to infinity). We provide a rigorous proof (see Appendix) and of course the error bounds here are new.

We will suppose that the matrices $X_{e}$ depend smoothly on a parameter $x$, so that we may define $\tX_{e}:=\partial X_{e}/\partial x$, and we define the base matrices to be at $x=0$. In some cases, the parametrization will be defined only for $x\ge 0$, and in those cases we will understand the partial derivatives to be one-sided derivatives, and the limits $\lim_{x\to 0}$ will be the one-sided limits $\lim_{x\downarrow 0}$.

\begin{Thm}\label{T:computeSS1}
Let $U_{e}$ and $V_{e}$ be independent random variables with distributions $\regpi_{e}$ and $\tilde{\pi}_{e}$. Then
\begin{equation} \label{E:exactSS1}
a':=\plx{x}=\sum_{e\in\M}\nu_{e}\Ex \left[\frac{V^T_{e} \tX_{e} U_{e}}{V^T_{e} X_{e} U_{e}}\right]
\end{equation}
Each term may be approximated by averaging samples of the form
\begin{equation} \label{E:approxSS1}
\sum_{e\in\M}\nu_{e}\frac{V^{(m)T}\tX_{e} U^{(m)}}{V^{(m)T} X_{e} U^{(m)}},
\end{equation}
where $u_0, v^T_0$ are any fixed elements of $\,\U,\, \V$ respectively, $U^{(m)}=X_{\te_{1}}\cdots X_{\te_{m}}u_{0}$ and $V^{(m)T}=v^T_{0}X_{e_{m}}\cdots X_{e_{1}}$,  $e=\te_{0},\te_{1},\dots,\te_{m}$ form a sample from the Markov chain $\tP$, and $e=e_{0},e_{1},\dots,e_{m}$ form a sample from the Markov chain $P$. The systematic error may be bounded uniformly by
\begin{equation} \label{E:unifboundSS1}
2(\exp(4k_{2} r^{m})-1) a',
\end{equation}
while the sampling error at level $p$ on $J$ samples is bounded by
\begin{equation} \label{E:samperrSS1}
2\sum_{e\in\M}\nu_{e}\sup_{u\in\U,v^T \in\V} \frac{v^T \tX_{e} u}{v^T X_{e}u} \left(\frac{-\log (p/2)}{2J}\right)^{1/2}.
\end{equation}

Suppose the simulated expectation is
$$
\sum_{e\in\M}\frac{\nu_{e}}{J} \sum_{j=1}^{J} \frac{(V^{(m)}(j))^T \tX U^{(m)}(j)}{(V^{(m)}(j))^T X_{e} U^{(m)}(j)},
$$
where
\begin{align*}
U^{(m)}(j)&=X_{\te_{1}(j)}\cdots X_{\te_{m}(j)}u_{0}=: \tY_{m}(j) u_{0}\text{ and}\\
(V^{(m)}(j))^T&=v^T_{0}X_{e_{m}(j)}\cdots X_{e_{1}(j)}=:v^T_{0}Y_{m}(j).
\end{align*}
Let
\begin{align*}
\U(j)&:=\tY_{m}(j) \U=\bigl\{ \tY_{m}(j)u\, :\, u\in \U\bigr\},\\
\V(j)&:=\V Y_{m}(j) =\bigl\{ v^T Y_{m}(j)\, :\, v^T \in \V\bigr\}.
\end{align*}
Then we may also bound the systematic error by
\begin{equation} \label{E:syserrSS1}
\begin{split}
&\sum_{e\in\M}\frac{\nu_{e}}{J} \sum_{j=1}^{J}\sup_{\begin{smallmatrix} u\in\U(j)\\v^T\in\V(j)\end{smallmatrix}} \left| \frac{v^T \tX_{e} u}{v^T X_{e} u} - \frac{(V^{(m)}(j))^T\tX_{e} U^{(m)}(j)}{(V^{(m)}(j))^T X_{e} U^{(m)}(j)} \right|\\
&\le \sum_{e\in\M}\frac{\nu_{e}}{J}\sum_{j=1}^{J} \left( \exp\left\{ 2\sup_{u\in\U(j)} \rho(u,U^{(m)}(j)) +2\sup_{v^T\in\V(j)} \rho(v,V^{(m)}(j)) \right\}-1\right)\\
&\hspace*{45mm}\times\frac{(V^{(m)}(j))^T\tX_e U^{(m)}(j)}{(V^{(m)}(j))^T X_{e} U^{(m)}(j)} \\
&\le \sum_{e\in\M}\frac{\nu_{e}}{J}\sum_{j=1}^{J} \frac{(V^{(m)}(j))^T\tX_e U^{(m)}(j)}{(V^{(m)}(j))^T X_{e} U^{(m)}(j)} \left( \exp\left\{ 2\Delta\bigl(Y_{m}(j)\bigr) + 2\Delta\bigl(Y_{m}(j)^{T}\bigr) \right\} -1 \right).
\end{split}
\end{equation}
\end{Thm}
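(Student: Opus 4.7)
The plan has four parts.

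\textbf{Exact formula.} I would start from the pathwise derivative obtained by direct differentiation: for a fixed sample $e_{1},\dots,e_{t}$ and smoothly varying $X_{e}(x)$,
\begin{equation*}
\left.\partial_{x} \log \|Y_{t}(x)u_{0}\|\right|_{x=0} \;=\; \sum_{s=1}^{t}\frac{v_{s,t}^{T}\tX_{e_{s}}u_{s-1}}{v_{s,t}^{T}X_{e_{s}}u_{s-1}},
\end{equation*}
where $v_{s,t}^{T}\propto \indic^{T}X_{e_{t}}\cdots X_{e_{s+1}}$ and $u_{s-1}\propto X_{e_{s-1}}\cdots X_{e_{1}}u_{0}$, both normalized. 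By the contraction \eqref{E:basicconbound}, as $\min(s,t-s)\to\infty$ the pair $(u_{s-1},v_{s,t})$ converges in the projective metric to a pair $(U_{e_{s}},V_{e_{s}})$ of distribution $\regpi_{e_{s}}\otimes\tilde\pi_{e_{s}}$, conditionally independent given $e_{s}$ because they depend on disjoint tails of the two chains. Birkhoff's theorem for the stationary environment process then gives
\begin{equation*}
t^{-1}\sum_{s=1}^{t}\frac{v_{s,t}^{T}\tX_{e_{s}}u_{s-1}}{v_{s,t}^{T}X_{e_{s}}u_{s-1}} \;\longrightarrow\; \sum_{e\in\M}\nu_{e}\,\Ex\!\left[\frac{V_{e}^{T}\tX_{e}U_{e}}{V_{e}^{T}X_{e}U_{e}}\right].
\end{equation*}
The central obstacle, and the gap in \cite{tuljapurkar1990pdv}, is justifying the exchange of $\partial_{x}$ with $\lim_{t\to\infty}$. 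I would close it by obtaining uniform-in-$t$ bounds on $t^{-1}|\partial_{x}^{2}\log\|Y_{t}(x)u_{0}\||$ in a neighborhood of $x=0$, exploiting the uniform positivity of $v^{T}X_{e}u$ on $\U\times\V$ together with smoothness of $X_{e}(x)$, then combining with the mean value theorem and dominated convergence.

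\textbf{Uniform systematic error.} From \eqref{E:limbound}, $\rho(U^{(m)},U_{e})\le k_{2}r^{m}$ and likewise $\rho(V^{(m)},V_{e})\le k_{2}r^{m}$. Two applications of the log-ratio inequality \eqref{E:ratiobound} (one on each side of the bilinear pairing) give
\begin{equation*}
\bigl|\log(V_{e}^{T}\tX_{e}U_{e}) - \log(V^{(m)T}\tX_{e}U^{(m)})\bigr| \;\le\; 2k_{2}r^{m},
\end{equation*}
and the same for the denominator, so the log-ratio of the two ratios is bounded by $4k_{2}r^{m}$. Hence on every path the approximate ratio differs from the exact one by a factor in $[e^{-4k_{2}r^{m}},e^{4k_{2}r^{m}}]$. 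Taking expectations, converting the log bound to an additive one via $|e^{\delta}-1|$, and accounting for the one-sided error in both directions yields the factor $2(e^{4k_{2}r^{m}}-1)a'$ of \eqref{E:unifboundSS1}.

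\textbf{Sampling error.} Each sample $F_{j}:=\sum_{e}\nu_{e}(V^{(m)T}(j)\tX_{e}U^{(m)}(j))/(V^{(m)T}(j)X_{e}U^{(m)}(j))$ is nonnegative and bounded above by $\sum_{e}\nu_{e}\sup_{u\in\U,v^{T}\in\V}(v^{T}\tX_{e}u)/(v^{T}X_{e}u)$, because $U^{(m)}(j)\in\U$ and $V^{(m)}(j)\in\V$ by the stability of these sets. Hoeffding's inequality \eqref{E:hoeffding} applied to the i.i.d.\ $F_{j}$'s yields \eqref{E:samperrSS1} directly.

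\textbf{Adaptive bound.} Since $\U$ is stable under forward iteration, both $U^{(m)}(j)$ and the stationary limit $U_{e}$ lie in $\U(j)=\tY_{m}(j)\U$; hence $\rho(U_{e},U^{(m)}(j))\le\sup_{u\in\U(j)}\rho(u,U^{(m)}(j))\le\Delta(\tY_{m}(j))$, and analogously on the dual side. Repeating the argument of the second part with these sharper sample-dependent distances in place of the uniform $k_{2}r^{m}$, and invoking $|x-y|\le y(e^{|\log(x/y)|}-1)$ in the form that gives the ratio times $(e^{\delta}-1)$, produces the chain of inequalities \eqref{E:syserrSS1}.
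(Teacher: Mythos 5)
Your proposal is essentially correct in spirit but takes a genuinely different path to the exact formula \eqref{E:exactSS1} than the paper does, and there is one small but real gap in the error-bound discussion.

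On the exact formula: the paper represents $a$ via the per-period increment, $a = \lim_m \Ex[\log(\|X_{e_m}Y_{m-1}u_0\|/\|Y_{m-1}u_0\|)]$, writes the difference $a(\epsilon)-a(0)$ as a telescoping sum over the position $s$ at which the perturbation is ``turned on'' inside the product, and uses the contraction estimate \eqref{E:rhobound} to show each summand $a_{s,m}(\epsilon)$ is bounded by $C r^{s-1}$ uniformly in $\epsilon$. This geometric decay is precisely what permits the exchange of $\lim_{\epsilon\to 0}$ and $\lim_{m\to\infty}$, after which the sum telescopes in expectation (after replacing $u_0$ by $U\sim\regpi_{e_0}$ and invoking stationarity of $\regpi$) down to the single term $\Ex[V^T\tX_e U/(V^T X_e U)]$. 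You instead write $a$ as the Cesaro average $\lim_t t^{-1}\log\|Y_t(x)u_0\|$, differentiate the path, and invoke Birkhoff's ergodic theorem, proposing to justify the exchange of derivative and limit by a uniform-in-$t$ bound on $t^{-1}\partial_x^2\log\|Y_t\|$. That route can be made to work, but it is harder than it looks: (i) your summand $R_{s,t}$ depends on both $s$ and $t$, so one must first separate off the $t$-dependence (bounding $\sum_s|R_{s,t}-R_s^\infty|$ geometrically) before Birkhoff applies; and (ii) the second-derivative bound you invoke requires a decorrelation estimate on the double sum $\sum_{s,s'}$ of cross-terms, which is not sketched and is considerably more work than the paper's first-order argument. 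In short, the paper buys simplicity by never touching second derivatives; you would need a further page of work to close the exchange of limits that you correctly identify as the crux.

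Your error-bound arguments for \eqref{E:unifboundSS1}, \eqref{E:samperrSS1}, and \eqref{E:syserrSS1} track the paper's: \eqref{E:limbound} and \eqref{E:ratiobound} for the uniform bound, Hoeffding for the sampling bound, and $\Delta(\cdot)$ for the adaptive bound. One genuine gap: you assert that each sample $F_j$ is ``nonnegative and bounded above,'' and use this to apply Hoeffding with a one-sided range. But $\tX_e=\partial X_e/\partial x$ need not be entrywise nonnegative (the paper's formula \eqref{E:Lip1} with $|\tX_e u|$ signals precisely this), so $v^T\tX_e u$ and hence $F_j$ can change sign. In that case the relevant range is symmetric, $[-\beta,\beta]$, and the factor of $2$ in \eqref{E:samperrSS1} is then \emph{necessary}, not decorative. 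The same sign caveat affects your uniform-systematic-error argument: the log-ratio manipulations via \eqref{E:ratiobound} require $v^T\tX_e$ (resp.\ $\tX_e u$) to be a positive (co)vector, which does not hold automatically; one either restricts attention to one-signed perturbations or works with $|\tX_e|$ and tracks signs separately. The paper's proof glosses over this as well, but you should not independently assert nonnegativity that is not a standing hypothesis.
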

%Consider redoing with one vector fixed.

Note that the bound \eqref{E:unifboundSS1} is given as a proportion of the unknown $a'$. It can be turned into an explicit bound by using an upper bound on $a'$. For instance, it is easy to compute that
\begin{equation} \label{E:Lip1}
\operatorname{Lip}(a)\le \sup_{u\in\U} \max_{e\in\M} \exp\left\{ \rho\left(\left|\tX_{e}u\right|,X_{e}u\right)\right\}
\frac{k_{2}}{1-r}
\end{equation}

\section{Environments and Coupling} \label{sec:marcoup}
Suppose we make a small change in the transition matrix $P$, and want to compare population growth along environmental sequences generated by the original and the perturbed matrix. We expect that the perturbed environmental sequences will only occasionally deviate from the environment that we ``would have had'' in the original distribution of environments. Computing the derivative of $a$ is then a matter of measuring the cumulative deviations due to these changes. In this section we take an essential first step: fix the transition matrix $P$ and compare cumulative change in total population size when starting in environment $e$, as compared with starting in the stationary distribution $\nu$. Variants of this problem arise in the standard Markov-Chain Monte Carlo (MCMC) problem: estimate by simulation the expectation of a certain function from the stationary distribution of a Markov chain when that distribution is unknown. We need to measure the distance between distributions, and hence the difference between expectations. A standard method for doing this is {\em coupling}. For an outline of coupling techniques in MCMC, see \cite{kendall2005nps} and \cite{RR04}. We use coupling in two ways, corresponding to the two components of the Markov chain: the environment and the population vector.

\nc{\eze}{{}_{e'}\zeta_{e}}
Fix environments $e$ and $e'$ (possibly the same). We define sequences $e_{0},e_{1},\dots$; $e'_{0},e'_{1},\dots$; and $\te_{0},\te_{1},\dots$: all three are Markov chains with transition probabilities $P$, but with $e_{0}=e$, $e'_{0}=e'$ and $\te_{0}$ having distribution $\nu$ (so that $(\te_{i})$ is stationary). Then
\begin{equation} \label{E:zeta}
\begin{split}
\eze&:=\lim_{t\to\infty} \Bigl(\Ex\left[ \log\|X_{e_{t}}\cdots X_{e_{0}}\|\right]- \Ex\left[ \log\|X_{e'_{t}}\cdots X_{e'_{0}}\|\right]\Bigr)\\
\zeta_{e}&:=\lim_{t\to\infty} \Bigl(\Ex\left[ \log\|X_{e_{t}}\cdots X_{e_{0}}\|\right]- \Ex\left[ \log\|X_{\te_{t}}\cdots X_{\te_{0}}\|\right]\Bigr)\\
&=\sum_{e'=1}^{\M} \nu_{e'} \cdot \eze.
\end{split}
\end{equation}
Note that when the environments are i.i.d.\ --- so $P(e,e')=\nu_{e'}$ --- we have
$$
\eze=\log\|V^{T}X_{e}\| - \log\|{V'}^{T}X_{e'}\| .
$$

Computing $\zeta_{e}$ depends on coupling the version of the Markov chain starting at $e$, to another version starting in the distribution $\nu$. We define the {\em coupling time} $\tau$ to be the first time such that $e_{\tau}=\te_{\tau}$; after this time the chains follow identical trajectories. If we know the distribution of $\tau$ and of the sequences followed by the two chains from time 0 to $\tau$, we can average the diferences in \eqref{E:zeta} to find $\zeta$. The advantage of coupling is, first, that it reduces the variability of the estimates, and second, that we know from the simulation when the coupling time has been achieved, which gives bounds on the error. A suitable choice is Griffeath's maximal coupling \citep{dG75} which we will apply in Pitman's \citep{pitman1976cmc} path-decomposition representation. (The coupling is ``maximal'' in the sense of making the coupling time, and hence the variance of the estimate, as small as possible.) However we must be careful about sampling values of $\tau$ because they may be large if the Markov chain mixes very slowly. To deal with this, we use a resampling technique to overweight coupling times that generate a large contribution to $\zeta$.

Beginning with a fixed environment $e$, the procedure is as follows:
\begin{enumerate}
\item Define the sequence of vectors $\alpha_{t}:=\bigl(\alpha_t(e')=P^{t}(e,e')-\nu(e')\bigr)$. We also define $\alpha^{+}_{t}$ and $\alpha^{-}_{t}$ to be the vectors of pointwise positive and negative parts respectively. Let $C(t)$ be any bound on $\bigl|\log\|X_{e_{1}}\cdots X_{e_{t}}\|-\log\|X_{e'_{1}}\cdots X_{e'_{t}}\|\,\bigr|$, where the $e_{i}$ and $e'_{i}$ are any environments. From \eqref{E:normbound2} we know that $2\hat{k}+2t\hat{r}$ is a possible choice for $C(t)$.
\item For pairs $(t,e')$, where $t$ is a positive integer and $e'\in\M$, define a probability distribution
$$
q(t,e'):= \begin{cases}
\nu_{e}&\text{if } e=e', t=0;\\
0&\text{if } e\ne e', t=0;\\
[\alpha_{t-1}^{+}P](e') - \alpha_{t}^{+}(e')&\text{otherwise.}
\end{cases}
$$
This is the distribution of the pair $(\tau,e_{\tau})$ for the maximally coupled chain. Define
$$
A:=\sum_{t_{*}=1}^{\infty}\sum_{e_{*}=1}^{M} q(t_{*},e_{*})C(t_{*}),
$$
and a probability distribution on $\N\times \M$
$$
\tq(t,e'):=\frac{q(t,e')C(t)}{A}.
$$

\renewcommand{\te}{\check{e}}
\item Average $J$ independent realizations of the following random variable: Let $(\tau,e')$ be chosen from the distribution $\tq$ on $\N\times \M$. Let $(e_{0}(\tau,e'),\dots,e_{\tau}(\tau,e'))$ and $(\te_{0}(\tau,e'),\dots,\te_{\tau}(\tau,e'))$ be a realization of the coupled pair of Markov chains with transition probabilities $P$ and starting at $e_{0}(\tau,e')=e$ and $\te_{0}(\tau,e')$ with distribution $\nu$, conditioned on the coupling time being $\tau$ and $e_{\tau}=\te_{\tau}=e'$. These realizations are generated from independent inhomogeneous Markov chains running backward, with transition probabilities
\begin{align*}
\P\bigl\{ e_{i-1}=x\cond e_{i}=y \bigr\} &=\frac{\alpha^{+}_{i-1}(x) P(x,y)}{\sum_{x'=1}^{M} \alpha^{+}_{i-1}(x') P(x',y)},\\
\P\bigl\{ \te_{i-1}=x\cond \te_{i}=y \bigr\} &=\frac{\alpha^{-}_{i-1}(x) P(x,y)}{\sum_{x'=1}^{M} \alpha^{-}_{i-1}(x') P(x',y)}.
\end{align*}
The random variable is then
$$
 Z:= \frac{A}{C(\tau)} \log\frac{\|X_{e_{\tau}(\tau,e')}\cdots X_{e_{0}(\tau,e')}\|}{\|X_{\te_{\tau}(\tau,e')}\cdots X_{\te_{0}(\tau,e')}\|}.
$$
(Note that the realizations corresponding to $\tau=0$ are identically 0. The possibility of $\tau=0$ has been included only to simplify the notation. In practice, we are free to condition on $\tau>0$.)
\end{enumerate}

The change from $q$ to $\tq$ is an example of importance sampling (cf. Chapter V.1 in \cite{asmussen2007sto}). We oversample the values of the random variable with high $\tau$ to reduce the variability of the estimate. The importance sampling makes $Z(j)$ a bounded random variable, with bound $A$. Imagine that we had a source of perfect samples $V^T(j)$ from the distribution $\tilde{\pi}_{e_{m}}^{*}$, and define
$$
\tZ(j):=\frac{A}{C(\tau(j))}\log\frac{\|V^T(j)X_{e_{\tau}(\tau(j),e'(j);j)}\cdots X_{e_{0}(\tau(j),e'(j);j)}\|}{\|V^T(j)X_{\te_{\tau}(\tau(j),e'(j);j)}\cdots X_{\te_{0}(\tau(j),e'(j);j)}\|}.
$$
Let
\begin{align*}
Y_{1}(j)&:= X_{e_{\tau}(j)}\cdots X_{e_{0}(j)},\\
Y_{2}(j)&:= X_{e_{\tau}(j)}\cdots X_{e_{t}(j)} X_{\te_{t-1}(j)}\cdots X_{\te_{0}(j)} .
\end{align*}
Then
\begin{equation} \label{E:syserrzetaet}
\bigr|\tZ(j)-Z(j)\bigr|\le \left( \Delta\bigl(Y_{1}(j)\bigr) + \Delta\bigl(Y_{2}(j)\bigr)\right).
\end{equation}
At the same time $\Ex[\tZ]=\zeta_{e}$, so we may use \eqref{E:hoeffding2} to compute the bound
\begin{equation} \label{E:zetasamplebound}
\P\left\{ \Bigl|\zeta_{e}-n^{-1}\sum_{j=1}^{n} \tZ(j)\Bigr|> 2A\sqrt{-\rec{2J}\log (p_{0}/2)}\right\} \le p_{0}.
\end{equation}

\renewcommand{\te}{\tilde{e}}
\begin{Lem} \label{L:zeta}
The limits defining the coefficients $\eze$ and $\zeta_{e}$ exist and are finite. We may approximate $\zeta_{e}$ by
\begin{equation} \label{E:approxzeta}
\rec{J}\sum_{j=1}^{J} \frac{A}{C(\tau(j))} \log\frac{\|X_{e_{\tau}(\tau(j),e'(j);j)}\cdots X_{e_{0}(\tau(j),e'(j);j)}\|}{\|X_{\te_{\tau}(\tau(j),e'(j);j)}\cdots X_{\te_{0}(\tau(j),e'(j);j)}\|}.
\end{equation}
If $p_{0}$ is any positive number, the probability is no more than $p_{0}$ that the error in this estimation is larger than
\begin{equation} \label{E:totalzetaerror}
\frac{1}{J} \sum_{j=1}^{J}  \left( \Delta\bigl(Y_{1}(j)\bigr) + \Delta\bigl(Y_{2}(j)\bigr)\right)+2A\sqrt{-\rec{2J}\log (p_{0}/2)},
\end{equation}
\end{Lem}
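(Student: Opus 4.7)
The plan is to establish the three assertions of the lemma---existence of the limits, unbiasedness of the idealized estimator $\tZ$, and the explicit error bound---by combining the maximal-coupling construction with the contraction inequalities \eqref{E:basicconbound}--\eqref{E:rhobound}, the importance-sampling reweighting from $q$ to $\tq$, and Hoeffding's inequality. For existence of $\eze$ and $\zeta_{e}$, I would run the coupled pair of chains and, using that $e_{s}=\te_{s}$ for $s>\tau$, decompose
\[
D_{t}:=\log\|X_{e_{t}}\cdots X_{e_{0}}\|-\log\|X_{\te_{t}}\cdots X_{\te_{0}}\|=\log\frac{\|Y_{1}\|}{\|Y_{2}\|}+\log\frac{\|X_{e_{t}}\cdots X_{e_{\tau+1}}y_{1}\|}{\|X_{e_{t}}\cdots X_{e_{\tau+1}}y_{2}\|},
\]
where $y_{i}:=Y_{i}\indic/\|Y_{i}\indic\|$. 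Telescoping differences in the second summand are bounded by $k_{2}r^{t-\tau-1}$ via \eqref{E:rhobound}, so $D_{t}$ is almost surely Cauchy with a geometric tail depending only on $t-\tau$. To lift this to convergence of $\Ex[D_{t}]$ I split at $\{\tau\le t\}$ versus $\{\tau>t\}$: on the first event dominated convergence applies, while on the second the integrand is bounded by $C(t)=2\hat{k}+2t\hat{r}$ and $\P(\tau>t)$ decays geometrically by ergodicity, so the contribution vanishes. Taking $\te_{0}=e'$ deterministic in place of $\nu$-distributed handles $\eze$ in the same way, and $\zeta_{e}=\sum_{e'}\nu_{e'}\eze$ then follows by conditioning on $\te_{0}$.

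For unbiasedness, the importance-sampling identity $\tq(t,e')=q(t,e')C(t)/A$ yields
\[
\Ex_{\tq}[\tZ]=\Ex_{q}\bigl[\log(\|V^{T}Y_{1}\|/\|V^{T}Y_{2}\|)\bigr],
\]
where under $q$ the pair $(\tau,e_{\tau})$ has the joint distribution of the maximal coupling time and meeting state. Taking $t\to\infty$ in the coupled expression and using the convergence from the first paragraph, the a.s.\ limit of $D_{t}$ equals $\log(\|V^{T}Y_{1}\|/\|V^{T}Y_{2}\|)$ with $V^{T}:=\lim_{t}\indic^{T}X_{e_{t}}\cdots X_{e_{\tau+1}}/\|\indic^{T}X_{e_{t}}\cdots X_{e_{\tau+1}}\|$. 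By the time-reversal description of $\tilde\pi$ in Section~\ref{sec:timerev}, the conditional distribution of this $V^{T}$ given $e_{\tau}$ is exactly the stationary distribution used to define $\tZ$, so $\Ex_{\tq}[\tZ]=\zeta_{e}$.

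Finally, the estimation error splits as
\[
|\hat\zeta_{e}-\zeta_{e}|\le J^{-1}\sum_{j=1}^{J}|Z(j)-\tZ(j)|+\Bigl|J^{-1}\sum_{j=1}^{J}\tZ(j)-\zeta_{e}\Bigr|.
\]
The first (systematic) term is bounded pointwise by \eqref{E:syserrzetaet}, producing the first summand of \eqref{E:totalzetaerror}. For the second (sampling) term, $|\tZ(j)|\le A$ almost surely because $|\log(\|V^{T}Y_{1}\|/\|V^{T}Y_{2}\|)|\le C(\tau)$ while the weight is $A/C(\tau)$; the $\tZ(j)$ are i.i.d., so Hoeffding's inequality in the form \eqref{E:hoeffding2} gives \eqref{E:zetasamplebound}, and the two contributions add to give \eqref{E:totalzetaerror}. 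The main obstacle is the middle step: one must verify that the almost-sure limit under the maximal coupling really gives a draw of $V^{T}$ from the stationary conditional distribution used to define $\tZ$, and this demands a careful interchange of $t\to\infty$ with expectation, justified by the geometric decay of $\P(\tau>t)$ together with the uniform contraction \eqref{E:basicconbound}.
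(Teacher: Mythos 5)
Your proposal is correct and matches the paper's own (largely implicit) argument: the paper states Lemma~\ref{L:zeta} after developing the coupling construction, the reweighting $q\mapsto\tq$, and the inequalities \eqref{E:syserrzetaet} and \eqref{E:zetasamplebound}, but it does not spell out either the existence of the limits or the identity $\Ex[\tZ]=\zeta_{e}$; you supply exactly those steps using the same two ingredients the paper relies on elsewhere, namely the maximal coupling with the contraction bounds \eqref{E:basicconbound}--\eqref{E:rhobound} (for the $D_{t}$ decomposition and dominated convergence, exploiting the geometric tail of $\tau$) and the weight cancellation $\tq(t,e')/q(t,e')=C(t)/A$ (for unbiasedness), then reassemble via the systematic/sampling split of \eqref{E:bup} and Hoeffding \eqref{E:hoeffding2}. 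The identification of $V^{T}=\lim_{t}\indic^{T}X_{e_{t}}\cdots X_{e_{\tau+1}}/\|\cdot\|$ with a draw from $\tilde{\pi}_{e_{\tau}}$ is correctly justified by the Markov property of the forward tail given $(\tau,e_{\tau})$ rather than by time-reversal per se, but that is a cosmetic point. The one thing you gesture at but should nail down is that $|\tZ|\le A$ requires $\bigl|\log(\|V^{T}Y_{1}\|/\|V^{T}Y_{2}\|)\bigr|\le C(\tau)$, not merely $\bigl|\log(\|Y_{1}\|/\|Y_{2}\|)\bigr|\le C(\tau)$: this does hold with $C(t)=2\hat{k}+2t\hat{r}$ once $\hat{k},\hat{r}$ are chosen so that $|\log v^{T}X_{e_{t}}\cdots X_{e_{0}}\indic|\le\hat{k}+(t+1)\hat{r}$ for every probability row vector $v$, which follows from row-sum bounds on the $X_{e}$ exactly as \eqref{E:normbound2} does, but it is worth saying explicitly.
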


It remains to compute $A$. From standard Markov chain theory there is an $M \times M$ matrix $Q$ such that
$$
P=\indic \nu^T + Q,
$$
a constant $D >0$ and a number $\xi\in (0,1)$ such that
$$
\|Q^t\| \le D \xi^t,
$$
from which it follows easily that for any vector $v$,
\begin{equation} \label{E:spectralgap}
\|v^{T}P^{t}-\nu^{T}\|\le D \xi^t \|v\|.
\end{equation}

Then
$$
\alpha^T_{t}=\indic^T_{e} Q^t,
$$
where $\indic_{e}$ is the vector with 1 in place $e$ and $0$ elsewhere, and
$$
\|\alpha_{t}\|\le D \xi^t.
$$
If we use the bound $C(t)=\hat{k}+t\hat{r}$, then
\begin{equation} \label{E:Abound}
\begin{split}
A&=\sum_{t=1}^{\infty} \bigl(\hat{k}+t\hat{r}\bigr) \bigl( \|\alpha_{t-1}\| -\|\alpha_{t}\|\bigr)\\
&=\hat{k}\|\alpha_{1}\|+\hat{r}\sum_{t=1}^{\infty}\|\alpha_{t}\|\\
&\le D \xi \left( \hat{k}+\frac{D \hat{r}}{1-\xi} \right)
\end{split}
\end{equation}

\section{Derivatives with respect to Environmental Transitions}
\nc{\neps}{\nu^{(\epsilon)}}
\nc{\hneps}{\hat\nu^{(\epsilon)}}
We are now ready to compute derivatives of $a$ with respect to changes in the distribution of environments, as determined by $P$. Complicating the notation is the constraint $\{P:\sum_{e'} P(e,e')=1\text{ for each }e\}$; thus, there can be no sense in speaking of the derivative with respect to changes in $P(e,e')$ for some particular $e, e'$. Instead, we must compute directional derivatives along the direction of some matrix $W$, in the plane $\sum_{e'} W_{e,e'}=0$.  For the purposes of this result we write $a=a(P)$, set $P_{\epsilon}=P+\epsilon W$, and wish to compute the directional derivative $\nabla_{W}a(P)$. One approach is to estimate
$$
\epsilon^{-1}\left( a(P_{\epsilon})-a(P)\right),
$$
and analyze the limit as $\epsilon\to 0$. The perturbations $\epsilon W$ are such that $P_{\epsilon}$ retains the ergodicity and irreducibility of $P$. (The result should be the same whether $\epsilon$ is positive or negative. If $P$ is on the boundary of the set of possible values, one or the other sign may be impossible. Some choices of $W$ may be impossible in both directions.) In the special case in which $W_{e,e'}=1$ and $W_{e,e''}=-1$, with all other entries 0, we are computing the derivative corresponding to a small increase in the rate of transitioning from environment $e$ to $e'$, and a decrease in the frequency of transitioning to $e''$.

We begin by describing separately the \iid case, when the probability $\nu(e)$ that the environment is $e$ is perturbed to $\nu(e) + \epsilon w(e)$, and of course the sum of the $w(e)$ is zero.

\begin{Thm}\label{T:computeSS2}
Suppose the environment process is \iid with distribution $\nu$, and we are given $w\in\R^{K}$ such that $\sum w_{e}=0$. Express $a$ as a function of $\nu$ alone, with the matrices $X_{1},\dots,X_{M}$ assumed fixed. Then
\begin{equation} \label{E:exactSS2}
\nabla_{w} a = \sum_{e\in\M}w_{e}\mathbb{E} \left[ \log(V^T X_{e} U) \right],
\end{equation}
where $U,V^T$ are independent random variables with distributions $\regpi$ and $\tilde{\pi}$ respectively.
This may be approximated by averaging samples of the form
\begin{equation} \label{E:approxSS2}
\sum_{e\in\M}w_{e} \log(V^{(m)T} X_{e} U^{(m)}),
\end{equation}
where
\begin{align*}
U^{(m)}&=X_{e_{1}}\cdots X_{e_{m}}u_{0}\text{ and}\\
V^{(m)T}&=v^T_{0}X_{e'_{m}}\cdots X_{e'_{1}},
\end{align*}
and $e_{1},\dots,e_{m},e'_{1},\dots,e'_{m}$ are independent samples from the distribution $\hat \nu$, and $u_{0}\in\U$ and $v^T_{0}\in\V$.

The systematic error may be bounded uniformly by $2k_{*} r^{m}\|w\|$, while the sampling error at level $p$ on $J$ samples is bounded by
\begin{equation} \label{E:samperrSS2}
2\|w\|\sup_{u\in\U,v^T\in\V} v^T u \left(\frac{-\log p}{2J}\right)^{1/2}.
\end{equation}

Suppose the simulated expectation is
$$
\frac{1}{J} \sum_{i=1}^{J} \sum_{e\in\M}w_{e}\mathbb{E} \left[ \log\|V^{(m)T}(j) X_{e} U^{(m)}(j)\| \right],
$$
where
\begin{align*}
U^{(m)}(j)&=X_{e_{1}(j)}\cdots X_{e_{m}(j)}u_{0}=: Y_{m}(j) u_{0}\text{ and}\\
V^{(m)T}(j)&=v^T_{0}X_{e'_{m}(j)}\cdots X_{e'_{1}(j)}=:v^T_{0}Y'_{m}(j).
\end{align*}
%Let
%\begin{align*}
%\U(j)&:=Y_{m}(j) \U=\bigl\{ Y_{m}(j)u\, :\, u\in \U\bigr\},\\
%\V(j)&:=\V Y_{m}(j) =\bigl\{ vY_{m}(j)\, :\, v\in \V\bigr\}.
%\end{align*}
We may also bound the systematic error by
\begin{equation} \label{E:syserrSS2}
\begin{split}
\frac{\|w\|}{J} \sum_{i=1}^{J} &\left( \sup_{u,u'\in \U} \rho(Y_{m}(j) u,Y_{m}(j) u') +  \hspace*{-3mm}\sup_{v^T,v^{'T}\in \U} \hspace*{-3mm}\rho(v^T Y_{m}(j),v'{}^{T} Y_{m}(j) ) \right)\\
&\le \frac{\|w\|}{J} \sum_{i=1}^{J} \left( \Delta\bigl(Y_{m}(j)\bigr) + \Delta\bigl(Y_{m}(j)^{T}\bigr)\right)
\end{split}
\end{equation}
\end{Thm}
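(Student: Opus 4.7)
The plan is to establish the exact formula \eqref{E:exactSS2} by differentiating the finite-time proxy $a_m(\nu):=m^{-1}\Ex_\nu[\log\|Y_m u_0\|]$ and passing $m\to\infty$; both error bounds then come out of the projective-metric contraction estimates already used in Section \ref{sec:convergence}. Because the environments are iid under $\nu_\epsilon:=\nu+\epsilon w$, the law of $(e_1,\ldots,e_m)$ is a tensor product of linear functions of $\epsilon$, and multilinearity gives exactly
$$\nabla_w a_m = \frac{1}{m}\sum_{j=1}^{m}\sum_{e\in\M} w_e\,\Ex_\nu\bigl[\log\|Y_m u_0\|\,\bigm|\,e_j=e\bigr].$$

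The algebraic key is the decomposition: conditional on $e_j=e$,
$$\log\|Y_m u_0\| = \log\|Y_{j-1} u_0\| + \log\|\indic^T X_{e_m}\cdots X_{e_{j+1}}\| + \log\bigl(V_{m-j}^T X_e U_{j-1}\bigr),$$
where $U_{j-1}:=Y_{j-1}u_0/\|Y_{j-1}u_0\|$ and $V_{m-j}^T:=\indic^T X_{e_m}\cdots X_{e_{j+1}}/\|\indic^T X_{e_m}\cdots X_{e_{j+1}}\|$ are normalized right- and left-partial products on disjoint blocks of environments, hence independent of each other and of $e_j$. The first two summands do not involve $e_j$, so the constraint $\sum_e w_e=0$ kills their contribution, leaving
$$\nabla_w a_m = \frac{1}{m}\sum_{j=1}^m\sum_e w_e\,\Ex\bigl[\log(V_{m-j}^T X_e U_{j-1})\bigr].$$
By \eqref{E:basicconbound} I couple $U_{j-1}$ with a draw $U\sim\regpi$ and $V_{m-j}$ with an independent draw $V\sim\tilde\pi$ at $\rho$-distances $k_2 r^{j-1}$ and $k_2 r^{m-j}$, and \eqref{E:ratiobound} then gives $|\log(V_{m-j}^T X_e U_{j-1})-\log(V^T X_e U)|\le k_2(r^{j-1}+r^{m-j})$; summing the two geometric series in $j$ yields
$$\Bigl|\nabla_w a_m - \sum_e w_e \Ex[\log(V^T X_e U)]\Bigr|=O(\|w\|/m).$$

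To promote this to the derivative of $a$ itself, I invoke the remark after \eqref{E:normbound} that $\U,\V,k_1,k_2,r$ can be chosen uniformly in the parameter. This makes both $a_m(\nu_\epsilon)\to a(\nu_\epsilon)$ (via \eqref{E:rhobound}) and the estimate just proved for $\nabla_w a_m(\nu_\epsilon)$ uniform on a small interval around $\epsilon=0$, whereupon the classical theorem that uniform limits of derivatives are derivatives of uniform limits delivers \eqref{E:exactSS2}. The error bounds on the estimator \eqref{E:approxSS2} follow from the same coupling: $\rho(U^{(m)},U)\le k_2 r^m$ and $\rho(V^{(m)},V)\le k_2 r^m$ with \eqref{E:ratiobound} give $|\log(V^{(m)T}X_e U^{(m)})-\log(V^T X_e U)|\le 2k_2 r^m$, and summing with $|w_e|$ produces the uniform systematic bound $2k_* r^m\|w\|$. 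The adaptive bound \eqref{E:syserrSS2} replaces the worst-case distances by $\sup_{u\in\U(j)}\rho(u,U^{(m)}(j))\le\Delta(Y_m(j))$ and the analogous quantity for $\V(j)$, bounded by $\Delta(Y_m(j)^T)$ via \eqref{E:delta}. The sampling error follows from Hoeffding \eqref{E:hoeffding2} applied to the bounded per-sample random variable; the constraint $\sum_e w_e=0$ is what lets one subtract a common reference term from each $\log(V^{(m)T}X_e U^{(m)})$ and reduce its range to the form stated in \eqref{E:samperrSS2}.

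The main obstacle is precisely this limit-exchange step. Closed-form differentiation of $a_m$ in $\epsilon$ is a routine multilinear computation, but rigorously identifying $\lim_m \nabla_w a_m$ with $\nabla_w a$ is the gap that a purely heuristic derivation sidesteps; it is closed here by the uniformity of the projective-metric contraction constants in the perturbation parameter.
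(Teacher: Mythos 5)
Your proof is correct, but it follows a genuinely different route from the paper's. The paper does not give a standalone argument for Theorem~\ref{T:computeSS2}: it states that this result ``follows as a special case'' of Theorem~\ref{T:computeSS3} (the Markov-environment version). To recover \eqref{E:exactSS2} from \eqref{E:exactSS3} one must specialize the Markov machinery to the iid case: set $W_{\te,e}=w_e$, observe that $\regpi_{\te}$ and $\tilde\pi_e$ collapse to $\regpi$ and $\tilde\pi$, expand the $\zeta_e$ terms using the iid formula for $\eze$, and check that all the extra pieces cancel under $\sum_e w_e=0$. The paper's route therefore rests on coupling the Markov chains (the $T_1,S_1,T_2$ analysis), importance sampling via the Radon--Nikodym derivative $F(\epsilon;e_0,\dots,e_m)$, and Lemma~\ref{L:differentstart}.

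What you do instead is exploit the iid structure directly. Under a product measure, $\epsilon\mapsto\Ex_{\nu_\epsilon}[\log\|Y_m u_0\|]$ is a polynomial, and multilinearity gives a clean expansion of $\nabla_w a_m$ as a sum over the position $j$ of the perturbed factor. Your block decomposition $\log\|Y_m u_0\|=\log\|Y_{j-1}u_0\|+\log\|\indic^T X_{e_m}\cdots X_{e_{j+1}}\|+\log(V^T_{m-j}X_e U_{j-1})$ is exactly right, the first two summands drop out because their conditional expectation given $e_j=e$ is $e$-independent and $\sum_e w_e=0$, and the remaining term converges to $\log(V^T X_e U)$ at geometric rate by \eqref{E:basicconbound} and \eqref{E:ratiobound}. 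Summing the two geometric tails yields $O(\|w\|/m)$ uniformly in $\epsilon$ (since the matrices, hence $k_2$ and $r$, are fixed), and the classical theorem on uniform convergence of derivatives closes the limit exchange, which is precisely the step that both your argument and the paper's have to handle carefully. The systematic and adaptive error bounds you extract from $\rho(U^{(m)},U)\le k_2r^m$, $\rho(V^{(m)},V)\le k_2r^m$, and the $\Delta$ functional agree with the paper's. The two approaches buy different things: the paper's detour through Theorem~\ref{T:computeSS3} unifies the iid and Markov cases but imports all the coupling and change-of-measure overhead, whereas your argument is self-contained, more elementary, and makes visible exactly how the iid product structure eliminates the correlations (and the $\zeta_e$ corrections) that drive the complexity in the general case --- at the cost of not generalizing beyond iid.

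One small caveat: your treatment of the sampling bound \eqref{E:samperrSS2} is only sketched, and the precise constant in the paper's display (the factor $\sup_{u\in\U,v^T\in\V}v^T u$ with no logarithm) does not obviously match what the $\sum_e w_e=0$ centering trick plus Hoeffding naturally produces, namely something of the form $\|w\|\cdot\sup_e\sup_{u,u',v,v'}\bigl|\log\frac{v^TX_eu}{v'^TX_eu'}\bigr|$; this is plausibly a typo in the paper's statement rather than a gap in your reasoning, but you should make the derivation explicit rather than deferring to the displayed formula.
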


The preceding result follows as a special case of the more general result for Markov environments.

\begin{Thm}\label{T:computeSS3}
Suppose the environment process is a Markov chain with transition matrix $P$, with each $e_{i}$ having nonzero probability in the stationary distribution $\nu$. Suppose we have a smooth curve of stochastic matrices $P^{(x)}$, with $P^{(0)}=P$, and where the parameter $x$ takes values either in a two-sided interval $[-\epsilon_{0},\epsilon_{0}]$, or a one-sided interval $[0,\epsilon_{0}]$. Let $W=\partial P^{(x)}/\partial x$, an $M \times  M$ matrix whose rows all sum to 0. The matrices $X_{1},\dots,X_{M}$ are assumed fixed. Then
\begin{equation} \label{E:exactSS3}
a'(0) = \sum_{\te,e\in\M}\nu_{\te}W_{\te,e}\left(\zeta_{\te}+\mathbb{E} \left[ \log\frac{V^T_{e}X_{e}X_{\te} U_{\te}}{\|V^{T}_{e}X_{e}\|} \right]\right),
\end{equation}
where $U_{\te},V^T_{e}$ are independent random variables with distributions $\regpi_{\te}$ and $\tilde{\pi}_{e}$ respectively.

The quantities $\zeta_{\te}$ may be approximated, with error bounds, according to the algorithm described in section \ref{sec:marcoup}.

The other part of the expression may be approximated by averaging samples of the form
\begin{equation} \label{E:approxSS3}
 \sum_{\te,e\in\M}\nu_{\te}W_{\te,e} \log\frac{V^{(m)T} X_{\te} X_{e} U^{(m)}}{\|V^{(m)} X_{e} \|},
\end{equation}
where
\begin{align*}
U^{(m)}&=X_{\te_{0}}\cdots X_{\te_{m}}u_{0}\text{ and}\\
V^{(m)T}&=v^T_{0}X_{e_{m}}\cdots X_{e_{0}},
\end{align*}
and $\te=\te_{0},\te_{1},\dots,e_{m}$ is a Markov chain with transition matrix $\tP$, and $e=e_{0},e_{1},\dots,e_{m}$ is an independent  Markov chain with transition probabilities $P$, and $u_{0}\in\U$ and $v^T_{0}\in\V$.

The systematic error may be bounded uniformly by $2k_{2} r^{m}\|(\nu |W|)\|$, while the sampling error at level $p$ on $J$ samples bounded by
\begin{equation} \label{E:samperrSS3}
2\bigl\|\nu^{T}|W| \bigr\| \sup_{u\in\U,v^T \in\V} v^T u \left(\frac{-\log p}{2J}\right)^{1/2}.
\end{equation}

Suppose the simulated expectation is
$$
\rec{J}\sum_{j=1}^{J} \sum_{\te,e=\in\M}\nu_{\te}W_{\te,e} \log\frac{V^{(m,e)T}(j) X_{e}X_{\te}U^{(m,\te)}(j)}{\|V^{(m,e)T}(j) X_{e}\|},
$$
where
\begin{align*}
U^{(m,\te)}(j)&=\frac{X_{e_{1}(j)}\cdots X_{e_{m}(j)}u_{0}}{\|X_{e_{1}(j)}\cdots X_{e_{m}(j)}u_{0}\|}=: \frac{\tY_{m,\te}(j) u_{0}}{\|\tY_{m,\te}(j) u_{0}\|}\text{ and}\\
V^{(m,e)T}(j)&=\frac{v^T_{0}X_{e_{m}(j)}\cdots X_{e_{1}(j)}}{\|v^T_{0}X_{e_{m}(j)}\cdots X_{e_{1}(j)}\|}=:\frac{v^T_{0}Y_{m,e}(j)}{\|v^T_{0}Y_{m,e}(j)\|}.
\end{align*}
%Let
%\begin{align*}
%\U^{(\te)}(j)&:=Y_{m}(j) \U=\bigl\{ Y_{m,\te}(j)u\, :\, u\in \U\bigr\},\\
%\V^{(e)}(j)&:=\V Y_{m}(j)=\bigl\{ v\tY_{m,e}(j)\, :\, v\in \V\bigr\}.
%\end{align*}
We may also bound the systematic error by
\begin{equation} \label{E:syserrSS3}
\begin{split}
\frac{1}{J} \sum_{j=1}^{J} &\sum_{\te,e\in\M}\nu_{\te}|W_{\te,e}|\left( \sup_{u,u'\in \U} \rho(\tY_{m,\te}(j) u,\tY_{m,\te}(j) u') +  \sup_{v^T,v^{'T}\in \V} \rho(v^T Y_{m,e}(j)X_{e},v'{}^{T} Y_{m,e}(j) X_{e}) \right)\\
&\le \frac{1}{J}\sum_{j=1}^{J} \sum_{\te,e\in\M}\nu_{\te}|W_{e,\te}| \left( \Delta\bigl(\tY_{m,\te}(j)\bigr) + \Delta\bigl(Y_{m,e}(j)^{T} X_{e}\bigr)\right).
\end{split}
\end{equation}
\end{Thm}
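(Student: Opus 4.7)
My plan is to use the likelihood-ratio method combined with a careful mixing analysis. Starting from $a(P^{(x)}) = \lim_{t\to\infty} t^{-1}\,\Ex^{(x)}[\log\|Y_t u_0\|]$ with $Y_t=X_{e_t}\cdots X_{e_1}$, I rewrite the perturbed expectation under the law of the original chain via the Radon--Nikodym derivative
$$
L_t^{(x)} = \frac{\nu^{(x)}(e_0)}{\nu(e_0)}\prod_{i=1}^{t}\frac{P^{(x)}(e_{i-1},e_i)}{P(e_{i-1},e_i)},
$$
so that $\Ex^{(x)}[\log\|Y_t u_0\|] = \Ex[L_t^{(x)} \log\|Y_t u_0\|]$. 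Differentiating at $x=0$ (routine for fixed $t$, since the sum over environmental paths is finite and analytic in $x$) gives
$$
\partial_x\,\Ex^{(x)}[\log\|Y_t u_0\|]\Big|_{x=0} = \Ex\left[\log\|Y_t u_0\|\cdot\left(\frac{w(e_0)}{\nu(e_0)}+\sum_{i=1}^t \frac{W(e_{i-1},e_i)}{P(e_{i-1},e_i)}\right)\right],
$$
where $w=\partial_x\nu^{(x)}|_{x=0}$ satisfies $w^{T}(I-P)=\nu^{T} W$. The hardest step is then interchanging $\lim_t$ with $\partial_x$ to identify $a'(0)$ with the $t\to\infty$ limit of $t^{-1}$ times this expression; I would justify this using the analyticity of $a$ in the parameters of $P$ established in \cite{peres1992adl}, together with the uniform-in-$x$ geometric contraction in \eqref{E:basicconbound} ($\U,\V$ and $k_1,k_2,r$ may be chosen simultaneously for all parameters in a neighborhood, as remarked at the end of Section~\ref{sec:convergence}).

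To extract the explicit formula \eqref{E:exactSS3}, I would analyze each summand using the Markov property. The key algebraic identity is $\Ex[W(e_{i-1},e_i)/P(e_{i-1},e_i)\mid\mathcal{F}_{i-1}]=\sum_{e}W(e_{i-1},e)=0$, so any $\mathcal{F}_{i-1}$-measurable piece of $\log\|Y_t u_0\|$ is annihilated. Decompose
$$
\log\|Y_t u_0\| = \log\|A u_0\| + \log\bigl\|B X_{e_i} X_{e_{i-1}} \hat A\bigr\|
$$
with $A=X_{e_{i-2}}\cdots X_{e_1}$, $B=X_{e_t}\cdots X_{e_{i+1}}$, $\hat A=A u_0/\|A u_0\|$. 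Only the second piece survives; by \eqref{E:rhobound} it further splits as $\log\|BX_{e_i}\| + \log V^{T}X_{e_{i-1}}\hat A$ up to a geometric error. Conditioning on $(e_{i-1},e_i)=(\te,e)$ and invoking Section~\ref{sec:timerev}, $\hat A$ converges in distribution to $U_{\te}\sim\regpi_{\te}$ via the time-reversed chain; the left-normalized form of $BX_{e}$ converges to $V_e^{T}X_e/\|V_e^{T}X_e\|$ with $V_e\sim\tilde{\pi}_e$; and $\Ex[\log\|BX_e\|\mid e_i=e] = (t-i+1)a + \zeta_e + o(1)$ by Lemma \ref{L:zeta}. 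Summing against $\nu(\te)W(\te,e)$ kills the $(t-i+1)a$ contribution (since $\sum_e W(\te,e)=0$), and a Ces\`aro-average over $i$ removes the $1/t$ prefactor to produce \eqref{E:exactSS3}. The form with $\zeta_{\te}$ rather than $\zeta_e$ is enforced by reorganizing the residual $\zeta$-contribution using $\sum_e W(\te,e)=0$ together with the coupling representation of $\zeta$ from Section~\ref{sec:marcoup}.

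The error bounds parallel those of Theorem \ref{T:computeSS1}. The uniform systematic bound $2k_2r^{m}\|\nu^{T}|W|\|$ comes from applying \eqref{E:rhobound} to the finite-$m$ products $\tY_{m,\te}u_0$ and $v_0^{T}Y_{m,e}$, with the outer factor $\|\nu^{T}|W|\|=\sum_{\te,e}\nu_{\te}|W_{\te,e}|$ from the sum. The Hoeffding bound \eqref{E:samperrSS3} uses the uniform estimate $\|\nu^{T}|W|\|\cdot\sup_{u\in\U,v^{T}\in\V} v^{T}u$ on each sample. The adaptive systematic bound \eqref{E:syserrSS3} follows by replacing worst-case suprema with the realized projective diameters $\Delta(\tY_{m,\te}(j))$ and $\Delta(Y_{m,e}(j)^{T}X_e)$, mirroring the argument for \eqref{E:syserrSS1}. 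The $\zeta_{\te}$ components are estimated separately by the coupling/importance-sampling procedure of Section~\ref{sec:marcoup}, with error controlled by Lemma \ref{L:zeta}.
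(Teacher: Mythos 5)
The likelihood-ratio/importance-sampling part of your proposal, the telescoping decomposition, the use of $\sum_e W(\te,e)=0$, and the convergence of $\hat A$ and the left-normalized product to $\regpi_{\te}$ and $\tilde\pi_e$ via time-reversal all parallel the second phase of the paper's proof. The error-bound sketch is also essentially right. However, there is a genuine gap at precisely the step you flag as ``the hardest'': the interchange of $\lim_{t\to\infty}$ with $\partial_x$.

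Appealing to the analyticity of $a$ (Peres 1992) plus the uniform-in-$x$ geometric contraction \eqref{E:basicconbound} does not by itself justify the interchange. The contraction gives $|a_m(x)-a(x)|\le k_2 r^m$ uniformly over the real parameter interval, i.e.\ uniform $C^0$-convergence of analytic functions on a real interval, together with analyticity of the limit. That is not enough to conclude $a_m'(0)\to a'(0)$: consider $f_m(x)=m^{-1}\sin(m^2 x)$, which is entire, converges to the analytic function $0$ uniformly on $\R$, yet has $f_m'(0)=m\to\infty$. To use Cauchy estimates you would need uniform convergence on a \emph{complex} neighborhood of $x=0$, which is not what the Lange--Holmes cone-contraction machinery gives (it is a positivity argument, hence intrinsically real). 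Peres's theorem asserts analyticity of the limit $a$ but does not directly furnish the needed uniform complex bound on the finite-time approximants $a_m$. The paper's coupling argument supplies exactly the missing ingredient: by coupling the $\tP$- and $\tP^{(\epsilon)}$-chains and tracking the decoupling/recoupling times $T_i,S_i$, one obtains the quantitative estimate \eqref{E:finalgmbound2}, $\bigl|\Ex[\gamma] - \Ex[g(m,\epsilon;u)]\bigr|\le C(mr^m|\epsilon|+\epsilon^2)$; the crucial extra factor of $|\epsilon|$ multiplying $mr^m$ is what makes $\epsilon^{-1}\Ex[g(m,\epsilon;u)]$ converge uniformly in $\epsilon$, and hence licenses the exchange in \eqref{E:orderchange}. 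Your proposal never produces this factor. Since rigorously justifying exactly this interchange is a stated contribution of the paper, the gap is not cosmetic.

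Two smaller points. First, after applying the likelihood-ratio method, a term $\Ex_P\bigl[(\nu_{e_0})^{-1}\,d\nu^{(\epsilon)}_{e_0}/d\epsilon\cdot\log(\cdot)\bigr]$ survives; showing that its $m\to\infty$ limit vanishes requires a quantitative comparison of chains started from different environments, which the paper isolates as Lemma~\ref{L:differentstart} -- your sketch does not account for this term. Second, your extraction of the $\zeta$-term is asserted rather than derived (``enforced by reorganizing the residual $\zeta$-contribution''); the paper's derivation goes through \eqref{E:zetalim} by first subtracting the stationary-start log-norm using $\sum_e W(\te,e)=0$, which is the step that actually produces a $\zeta$-coefficient, and getting the correct subscript requires carrying that identity through precisely.
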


We note that expressions like \eqref{E:exactSS2} and \eqref{E:exactSS3} are examples of what \cite{pB92} calls ``ersatz derivatives''. In a rather different class of applications Br\'emaud suggests applying maximal coupling.

\nc{\sys}{\on{Error}_{\on{sys}}}

\section{Discussion}
Our results provide analytical formulas and simulation estimators for the derivatives of stochastic growth rate with respect to the transition probability matrix or the population projection matrices. We have concentrated here on the theoretical results; although this may not be obvious, we have made considerable effort at brevity. Partly for this reason, we will present elsewhere numerical applications of these results. We expect that our results should carry over to integral population models (IPMs), given the strong parallels between the stochastic ergodicity properties of IPMs and matrix models \citep{ellner2007sto}.

Our results apply not only to stochastic structured populations but to any stochastic system in which a Lyapunov exponent of a product of random matrices determines stability or other dynamic properties. Examples include the net reproductive rate in epidemic models and some models of network dynamics. An obvious application of our results is to the analysis of optimal life histories, i.e., environment-to-projection matrix maps that maximize the stochastic growth rate. As discussed by \cite{mcnamara1997opt}, this optimization problem translates into what is called an average reward problem in stochastic control theory, and so our results may be more generally useful in such control problems.

\section{Acknowledgements}
We thank NIA (BSR) for support under 1P01 AG22500. David Steinsaltz was supported by a New Dynamics of Ageing grant, a joint interdisciplinary program of the UK research councils.

\newpage
\section*{Appendix}
\setcounter{section}{1}
\renewcommand{\thesection}{\Alph{section}}

Proofs of the theorems.

\subsection{Estimating the stochastic growth rate} \label{sec:mc}
We prove here Theorem \ref{T:computeLE}. The quantity we are trying to compute is
\begin{equation} \label{E:computeLE}
a=\Ex\left[\log \|X_{e}U\|\right],
\end{equation}
where $(U,e)$ is selected from the distribution $\regpi$. Let $e_{0},e_{1},e_{2},\dots$ be a realization of the stationary Markov chain with transition matrix $P$. Let $Y_{m}:=X_{e_{m}}X_{e_{m-1}}\cdots X_{e_{1}}$. Let $u_{0}\in \U$ be chosen, and let $U$ be a random variable with distribution $\regpi_{e_{0}}$. Then $a=\Ex [\log\|X_{e_{m+1}} Y_{m}U\|/\|Y_{m}U\|]$, which may be approximated by $\Ex [\log\|X_{e_{m+1}} Y_{m}u_{0}\|/\|Y_{m}u_{0}\|]$.

If we identify systematic error with bias, this is
$$
\sys=\left|\Ex\left[\log \frac{\|X_{e_{m+1}}Y_{m} u_{0}\|}{\|Y_{m} u_{0}\|}\right]-\Ex \left[ \log \frac{\|X_{e_{m+1}}Y_{m} U\|}{\|Y_{m} U\|}\right]\right|,
$$
since $(Y_{m}U/\|Y_{m}U\|,e_{m})$ also has the distribution $\regpi$ (if $Y_{m}$ and $U$ are taken to be independent). Thus
\begin{align*}
\sys&\le \Ex\left[\left|\log \frac{\|X_{e_{m+1}}Y_{m} u_{0}\|}{\|Y_{m} u_{0}\|}-\log \frac{\|X_{e_{m+1}}Y_{m} U\|}{\|Y_{m} U\|}\right|\right]\\
&\le \Ex\left[\sup_{u,u'\in\U}\left|\log \frac{\|X_{e_{m+1}}Y_{m} u\|}{\|Y_{m} u\|}-\log \frac{\|X_{e_{m+1}}Y_{m} u'\|}{\|Y_{m} u'\|}\right|\right]\\
&\le k_{2}r^{m},
\end{align*}
by \eqref{E:rhobound}. The corresponding bound on the sampling error may be computed from \eqref{E:hoeffding2}.

For a particular choice of of $e_{1},\dots,e_{m+1}$ and $U$ we can also represent the random systematic error as
$$
\left|\log \frac{\|X_{e_{m+1}}Y_{m} u_{0}\|}{\|Y_{m} u_{0}\|}-\log \frac{\|X_{e_{m+1}}Y_{m} U\|}{\|Y_{m} U\|}\right|,
$$
which may be bounded by the summand in \eqref{E:syserrLE}.

\subsection{Estimating sensitivities: Matrix entries} \label{sec:matent}
We prove here Theorem \ref{T:computeSS1}. As discussed at the end of section \ref{sec:convergence}, we may assume that the compact sets  $\U$ and $\V$ are stable and satisfy the bounds of section \ref{sec:convergence} simultaneously for all $X_{e'}^{(\epsilon)}$. The stationary distributions corresponding to products of the perturbed matrices are denoted $\regpi^{(\epsilon)}$ and $\tilde{\pi}^{(\epsilon)}_{e'}$, and the corresponding regular conditional distributions are $\regpi^{(\epsilon)}_{e}$ and $\tilde{\pi}^{(\epsilon)}_{e'}$

The derivative $a'(0)$ may be written as
\begin{align*}
\lim_{\epsilon\to 0} \epsilon^{-1}&\left(\lim_{m\to\infty}\Ex\left[\log \frac{\|X_{e_{m}}^{(\epsilon)}X_{e_{m-1}}^{(\epsilon)}\cdots X_{e_{1}}^{(\epsilon)}u_{0}\|}{\|X_{e_{m-1}}^{(\epsilon)}\cdots X_{e_{1}}^{(\epsilon)}u_{0}\|}\right]-\lim_{m\to\infty}\Ex\left[\log \frac{\|X_{e_{m}}X_{e_{m-1}}\cdots X_{e_{1}}u_{0}\|}{\|X_{e_{m-1}}\cdots X_{e_{1}}u_{0}\|} \right]\right)\\
&=\lim_{\epsilon\to 0} \lim_{m\to\infty}\sum_{s=1}^{m-1}\Ex\biggl[\epsilon^{-1}\Bigl(\log \frac{\|X_{e_{m}}^{(\epsilon)}X_{e_{m-1}}^{(\epsilon)}\cdots X_{e_{s}}^{(\epsilon)}X_{e_{s-1}}\cdots X_{e_{1}}u_{0}\|}{\|X_{e_{m-1}}^{(\epsilon)}\cdots X_{e_{s}}^{(\epsilon)}X_{e_{s-1}}\cdots X_{e_{1}}u_{0}\|}\\
&\hspace*{5cm}-\log \frac{\|X_{e_{m}}^{(\epsilon)}X_{e_{m-1}}^{(\epsilon)}\cdots X_{e_{s+1}}^{(\epsilon)}X_{e_{s}}\cdots X_{e_{0}}u_{0}\|}{\|X_{e_{m-1}}^{(\epsilon)}\cdots X_{e_{s+1}}^{(\epsilon)}X_{e_{s}}\cdots X_{e_{1}}u_{0}\|}\Bigr)\biggr].
\end{align*}
where $e_{0},e_{1},\dots$ is a realization of the stationary Markov chain with transition probabilities $P$.
Define $a_{s,m}(\epsilon)$ to be the summand on the right-hand side above. By \eqref{E:rhobound},
$$
|a_{s,m}(\epsilon)|\le \epsilon^{-1}k_{2} r^{s-1} \sup_{u\in\U}\rho(X_{e}^{(\epsilon)}u,X_{e}u)\le  C r^{s-1},
$$
where
$$
C=2k_{2}\sup_{u\in\U} \max_{e\in\M} \max_{1\le \ell\le K}\frac{|\tX_{e} u|_{\ell}}{(X_{e}u)_{\ell}}.
$$

Since $\sup_{\epsilon}\sum_{s=m_{0}}^{\infty} \bigl| a_{s,m}(\epsilon)\bigr|\to 0$ as $m_{0}\to\infty$, we may exchange the order of the limits, to see that
\begin{equation} \label{E:reversed}
\begin{split}
a'(0)&= \hspace*{-2mm}\lim_{m\to\infty} \lim_{\epsilon\to 0} \epsilon^{-1}\sum_{s=1}^{m}\biggl(\Ex\biggl[\log \frac{\|X_{e_{m}}^{(\epsilon)}X_{e_{m-1}}^{(\epsilon)}\cdots X_{e_{s+1}}^{(\epsilon)}X_{e_{s}}\cdots X_{e_{0}}u_{0}\|}{\|X_{e_{m-1}}^{(\epsilon)}\cdots X_{e_{s+1}}^{(\epsilon)}X_{e_{s}}\cdots X_{e_{0}}u_{0}\|}\biggr]\\
&\hspace*{2cm}-\Ex\biggl[\log \frac{\|X_{e_{m}}^{(\epsilon)}X_{e_{m-1}}^{(\epsilon)}\cdots X_{e_{s+1}}^{(\epsilon)}X_{e_{s}}\cdots X_{e_{0}}u_{0}\|}{\|X_{e_{m-1}}^{(\epsilon)}\cdots X_{e_{s}}^{(\epsilon)}X_{e_{s-1}}\cdots X_{e_{0}}u_{0}\|}\biggr] \biggl).
\end{split}
\end{equation}
This limit is the same for any choice of $u_{0}$, hence would also be the same if we replaced $u_{0}$ by a random $U$, with any distribution on $\U$. We choose $U$ to have the distribution $\regpi_{e_{0}}$, independent of the rest of the Markov chain. By the invariance property of the distributions $\regpi$,
\begin{equation} \label{E:reversed2}
\begin{split}
a'(0)&= \hspace*{-2mm}\lim_{m\to\infty} \lim_{\epsilon\to 0} \epsilon^{-1}\biggl(\sum_{s=1}^{m} \Ex\left[\log \frac{\|X_{e_{m}}^{(\epsilon)}X_{e_{m-1}}^{(\epsilon)}\cdots X_{e_{s}}^{(\epsilon)}X_{e_{s-1}}\cdots X_{e_{0}}U\|}{\|X_{e_{m-1}}^{(\epsilon)}\cdots X_{e_{s}}^{(\epsilon)}X_{e_{s-1}}\cdots X_{e_{0}}U\|}\right]\\
&\hspace*{2.5cm} -\Ex\left[\log \frac{\|X_{e_{m}}^{(\epsilon)}X_{e_{m-1}}^{(\epsilon)}\cdots X_{e_{s+1}}^{(\epsilon)}X_{e_{s}}\cdots X_{e_{0}}U\|}{\|X_{e_{m-1}}^{(\epsilon)}\cdots X_{e_{s+1}}^{(\epsilon)}X_{e_{s}}\cdots X_{e_{0}}U\|}\right]\biggr)\\
&= \lim_{m\to\infty} \sum_{s=0}^{m} \lim_{\epsilon\to 0} \epsilon^{-1}\Ex\biggl[\log \frac{\|X_{e_{m}}^{(\epsilon)}X_{e_{m-1}}^{(\epsilon)}\cdots X_{e_{s}}^{(\epsilon)}U_{s-1}\|}{\|X_{e_{m-1}}^{(\epsilon)}\cdots X_{e_{s}}^{(\epsilon)}U_{s-1}\|}\\
&\hspace*{4cm} -\log \frac{\|X_{e_{m}}^{(\epsilon)}X_{e_{m-1}}^{(\epsilon)}\cdots X_{e_{s+1}}^{(\epsilon)}X_{e_{s}}U_{s-1}\|}{\|X_{e_{m-1}}^{(\epsilon)}\cdots X_{e_{s+1}}^{(\epsilon)}X_{e_{s}}U_{s-1}\|}\biggr],
\end{split}
\end{equation}
where $(U_{s-1},e_{s})$ has distribution $\regpi$.

For $m\ge s\ge 1$ define functions
$$
f_{s}(\epsilon,\delta):=\frac{\indic^{T} X_{e_{m}}^{(\delta)}X_{e_{m-1}}^{(\delta)}\cdots X_{e_{s+1}}^{(\delta)}X_{e_{s}}^{(\epsilon)}U_{s-1}}{\indic^{T} X_{e_{m-1}}^{(\delta)}\cdots X_{e_{s+1}}^{(\delta)} X_{e_{s}}^{(\epsilon)}U_{s-1}},
$$
where the denominator is understood to be 1 for $s=m$.
Take $f_{0}(\epsilon,\delta):= \|X_{e_{0}}^{(\epsilon)}U\|$. The summand on the right of \eqref{E:reversed2} may be written as
\begin{equation} \label{E:fdiff}
\lim_{\epsilon\to 0} \epsilon^{-1} \Ex \left[ \log f_{s}(\epsilon,\epsilon) - \log f_{s}(0,\epsilon) \right].
\end{equation}
By \eqref{E:rhobound},
$$
\epsilon^{-1}\left|\log f_{s}(\epsilon,\delta) - \log f_{s}(0,\delta)\right| \le k_{2}r^{s} \epsilon^{-1}\rho(X_{e_{s}}^{(\epsilon)} U,X_{e_{s}} U),
$$
which is bounded for $\epsilon$ in a neighborhood of 0, so the Bounded Convergence Theorem turns \eqref{E:fdiff} into
\begin{equation} \label{E:fdiff2}
%\begin{split}
\Ex \left[ \lim_{\epsilon\to 0} \epsilon^{-1} \bigl(\log f_{s}(\epsilon,\epsilon) - \log f_{s}(0,\epsilon) \bigr)\right]=
\Ex \left[ \frac{\partial \log f_{s}}{\partial \epsilon} (0,0) \right]\\
%&= \Ex \left[ \rec{f_{s}}\frac{\partial f_{s}}{\partial \epsilon} (0,0) \right]
%\end{split}
\end{equation}
since for any choice of $e_{0},\dots,e_{s}$ and $U$ the function $f_{s}$ is continuously differentiable at $(0,0)$ and bounded away from 0. We have, by linearity of the matrix product and $\|\cdot\|$,
\begin{align*}
&\frac{\partial \log f_{s}}{\partial \epsilon} (0,0) = \frac{\indic^{T}X_{e_{m}}\cdots X_{e_{s+1}}\frac{\partial}{\partial\epsilon} X_{e_{s}}^{(\epsilon)} U_{s-1} }{\|X_{e_{m}}\cdots X_{e_{s}}U_{s-1}\|}\\
&\hspace*{4cm}-\frac{\indic^{T} X_{e_{m-1}}\cdots X_{e_{s+1}}\frac{\partial}{\partial\epsilon} X_{e_{s}}^{(\epsilon)} U_{s-1} }{\|X_{e_{m-1}}\cdots X_{e_{s}}U_{s-1}\|}\\
&\qquad =\begin{cases}
\left( \frac{\indic^{T}X_{e_{m}}\cdots X_{e_{s+1}}\tX_{e_{s}} U_{s-1}}{\|X_{e_{m}}\cdots X_{e_{s}}U_{s-1}\|}-\frac{\indic^{T} X_{e_{m-1}}\cdots X_{e_{s+1}}\tX_{e_{s}} U_{s-1}}{\|X_{e_{m-1}}\cdots X_{e_{s}}U_{s-1}\|}\right)& \text{for } 1\le s\le m-1,\\
\frac{\indic^{T}\tX_{e_{m}} U_{m-1}}{\|X_{e_{m}} U_{m-1}\|}  & \text{for } s=m.
\end{cases}
\end{align*}

Combining this with \eqref{E:reversed2} yields the telescoping sum
\begin{align*}
a'(0)&= \lim_{m\to\infty} \biggl(\sum_{t=1}^{m} \Ex \left[\frac{\indic^{T}X_{e_{t}}\cdots X_{e_{1}}\tX_{e_{0}} U}{\indic^{T}X_{e_{t}}\cdots x_{e_{1}}X_{e_{0}}U} \right] \\
&\hspace*{3cm}- \sum_{t=1}^{m-1} \Ex \left[ \frac{\indic^{T}X_{e_{t}}\cdots X_{e_{1}}\tX_{e_{0}} U}{\indic^{T}X_{e_{t}}\cdots x_{e_{1}}X_{e_{0}}U} \right] \biggr)\\
&= \lim_{m\to\infty} \Ex \left[ \frac{\indic^{T}X_{e_{m}}\cdots X_{e_{1}}\tX_{e_{0}} U}{\indic^{T}X_{e_{m}}\cdots X_{e_{1}}X_{e_{0}} U} \right],
\end{align*}
where in the last line $(U,e_{0})$ has the distribution $\pi$. Define $V^T_{m}:=\indic^{T}X_{e_{m}}\cdots X_{e_{1}}/\|\indic^{T}X_{e_{m}}\cdots X_{e_{1}}\|$. Then $V^T_{m}$ converges in distribution to $V$, with distribution $\tilde{\pi}_{e_{0}}$, and so
$$
a'(0)=\lim_{m\to\infty} \Ex \left[ \frac{V^T_{m}\tX_{e} U}{V^T_{m}X_{e}U} \right]=\sum_{e\in\M}\nu_{e} \Ex \left[ \frac{V^{T}_{e}\tX_{e} U_{e}}{V^{T}_{e}X_{e}U_{e}} \right],
$$
which is identical to \eqref{E:exactSS1}.

Now we estimate the error. We use the representation 
$$
U:=\frac{X_{\tilde{e}_{1}}\cdots X_{\tilde{e}_{m}}U_{0}}{\|X_{\tilde{e}_{1}}\cdots X_{\tilde{e}_{m}}U_{0}\|}\text{ and } 
V^{T}:=\frac{V^T_{0}X_{e_{m}}\cdots X_{e_{1}}}{\|V^T_{0}X_{e_{m}}\cdots X_{e_{1}}\|}, 
$$
where $U_{0}$ and $V^T_{0}$ are assumed to have distributions $\pi_{\te_{m}}$ and $\tilde{\pi}_{e_{m}}$ respectively. Then
\begin{equation} \label{E:logss1est}
\begin{split}
\Bigl|\frac{V^{(m)T}\tX_{e} U^{(m)}}{V^{(m)T} X_{e} U^{(m)}} - \log \frac{V^T \tX_{e} U}{V^T X_{e} U}\Bigr| &\le 2\left( e^{2\rho(V,V^{(m)T}) +2\rho(U,U^{(m)})}-1\right) \frac{V \tX_{e} U}{VX_{e} U}\\
&\le 2 \left(e^{4k_{2}r^{m}}-1\right) \frac{V \tX_{e} U}{VX_{e} U},
\end{split}
\end{equation}
by \eqref{E:rhobound}. This implies the uniform bound on systematic error, and the bound on sampling error \eqref{E:samperrSS1} follows from applying \eqref{E:hoeffding2} to a trivial bound on the terms in the average. The simulated bound \eqref{E:syserrSS1} also follows directly from \eqref{E:logss1est}.

\nc{\Pep}{P^{(\epsilon)}}
\nc{\eep}{e^{(\epsilon)}}
\nc{\nep}{\nu^{(\epsilon)}}
\nc{\tPep}{\tP^{(\epsilon)}}
\subsection{Estimating sensitivities: Markov environments} \label{sec:markov}
We prove here Theorem \ref{T:computeSS3} by a combination of the coupling method and importance sampling. We use importance sampling for the actual computation, but coupling provides a more direct path to validating the crucial exchange of limits. Suppose we are given any $\epsilon$ such that $P+\epsilon W$ and $P-\epsilon W$ are both stochastic matrices.

Given two distributions $q$ and $q'$ on $\{1,\dots,M\}$, we  define a {\em standard coupling} between $q$ and $q'$. Suppose we are given  a uniform random variable $\omega$ on $[0,1]$. Let $\M_{-}:=\{e: q_{e}<q'_{e}\}$ and $\M_{+}:=\{e: q'_{e}<q_{e}\}$. Let $\delta:=\sum_{e\in\M_{-}} (q'_{e}-q_{e})=\sum_{e\in\M_{+}} (q_{e}-q'_{e})$. We define three random variables $\te$ on $\M$, $\te_{+}$ on $\M_{+}$, and $\te_{-}$ on $\M_{-}$, according to the following distributions:
\begin{align*}
\P\{\te=e\}&=\min\{q_{e},q'_{e}\}/(1-\delta),\\
\P\{\te_{+}=e\}&=(q_{e}-q'_{e})_{+}/\delta,\\
\P\{\te_{-}=e\}&=(q'_{e}-q_{e})_{+}/\delta.
\end{align*}
The joint distribution is irrelevant, but for definiteness we let them be independent. Then we define the coupled pair $(e,e')$ by
\begin{equation} \label{E:standardcouple}
\begin{split}
(\te,\te)&\text{ if } \omega>\delta;\\
(\te_{+},\te_{-})&\text{ if } \omega\le\delta.
\end{split}
\end{equation}
Then $e$ has distribution $q$, $e'$ has distribution $q'$, and $e=e'$ with probability $1-\delta$. This $\delta$ is called the {\em total-variation distance} between $q$ and $q'$.

We write $\Ex_{P}$ for the expectation with respect to the distribution that makes $e_{0},\dots,e_{m}$ a stationary Markov chain with transition matrix $P$. Define $\nep$ to be the stationary distribution corresponding to $\Pep$, and define $\tPep$ to be the time-reversed chain of $\Pep$. We define
$$
g(m,\epsilon;u):= \Ex_{\Pep}\left[\log\frac{\|X_{e_{m}}X_{e_{m-1}}\cdots X_{e_{0}} u\|}{\|X_{e_{m-1}}\cdots X_{e_{0}} u\|}\right]- \Ex_{P}\left[\log\frac{\|X_{e_{m}}X_{e_{m-1}}\cdots X_{e_{0}} u\|}{\|X_{e_{m-1}}\cdots X_{e_{0}} u\|}\right] .
$$
By the time-reversal property,
$$
g(m,\epsilon;u)= \Ex_{\tPep}\left[\log\frac{\|X_{e_{0}}X_{e_{1}}\cdots X_{e_{m}} u\|}{\|X_{e_{1}}\cdots X_{e_{m}} u\|}\right]- \Ex_{\tP}\left[\log\frac{\|X_{e_{0}}X_{e_{1}}\cdots X_{e_{m}} u\|}{\|X_{e_{1}}\cdots X_{e_{m}} u\|}\right] .
$$

For $\epsilon>0$ we couple a sequence $e_{0},\dots,e_{m}$ selected from the distribution $\tP$ to a sequence $\eep_{0},\dots,\eep_{m}$ selected from the distribution $\tPep$ as follows: We start by choosing $(e_{0},\eep_{0})$ according to the standard coupling of $(\nu,\nep)$. Assume now that we have produced sequences of length $i$, ending in $e_{i-1}$ and $\eep_{i-1}$. We then produce $(e_{i},\eep_{i})$ according to the standard coupling of row $e_{i-1}$ of $\tP$ to row $\eep_{i-1}$ of $\tPep$. (To simplify the typography in some places, we  use $e(i)$ and $\eep(i)$ interchangeably with $e_{i}$ and $\eep_{i}$.)

Let $\delta=\delta(\epsilon)$ be the maximum of the total variation distance between $\nu$ and $\nep$, and all of the pairs of rows. It is easy to see that there is a constant $c$ such that $\delta\le c\epsilon$ for $\epsilon$ sufficiently small. Define $\omega_{1},\omega_{2},\dots$ to be an i.i.d.\ sequence of uniform random variables on $[0,1]$, and two sequences of random times as follows: $T_{0}:=S_{0}:=-1$, and
\begin{align*}
T_{i+1}&=\min\bigl\{ t> S_{i}\, : \, \omega_{t}\le \delta \bigr\},\\
S_{i+1}&=\min\bigl\{ t> T_{i+1}\, : \, \eep_{t}= e_{t} \bigr\}.
\end{align*}
Thus, $\eep_{t}=e_{t}$ for all $S_{i}\le t< T_{i+1}$. Define for any $u_{0}\in \U$ the random vector
$$
U_{t}:=\lim_{m\to\infty} \frac{X_{e(t)}\cdots X_{e(t+m)} u_{0}}{\|X_{e(t)}\cdots X_{e(t+m)} u_{0}\|},
$$
and define a version of $g$ conditioned on $T_{1}$ and $T_{2}$
\begin{align*}
g(m,\epsilon;u;T_{1},T_{2})&:= \Ex_{\tPep}\left[\log\frac{\|X_{e_{0}}X_{e_{1}}\cdots X_{e_{m}} u\|}{\|X_{e_{1}}\cdots X_{e_{m}} u\|}\, \Bigl| \, T_{1},T_{2}\right]\\
&\hspace*{2cm}- \Ex_{\tP}\left[\log\frac{\|X_{e_{0}}X_{e_{1}}\cdots X_{e_{m}} u\|}{\|X_{e_{1}}\cdots X_{e_{m}} u\|}\, \Bigl| \, T_{1},T_{2}\right] .
\end{align*}
Then for any $u\in\U$,
\begin{equation} \label{E:nablaW}
\nabla_{W} a(P)=a'(0)=\lim_{\epsilon\downarrow 0} \lim_{m\to\infty}\epsilon^{-1}\Ex \left[g(m,\epsilon;u;T_{1},T_{2})\right].
\end{equation}
We also define
\begin{align*}
\gamma(\epsilon;T_{1},T_{2}):=\Ex&\Bigl[\log\frac{\|X_{\eep(0)}\cdots X_{\eep(S_{1}-1)} U_{S_{1}}\|}{\|X_{\eep(1)}\cdots X_{\eep(S_{1}-1)} U_{S_{1}}\|}\\
&\hspace*{15mm}- \log \frac{\|X_{e(0)}X_{e(1)} \cdots X_{e(S_{1}-1)} U_{S_{1}}\|}{\|X_{e(1)}\cdots  X_{e(S_{1}-1)} U_{S_{1}}\|}\, \Bigl| \, T_{1},T_{2}\Bigr]
\end{align*}

We break up these expectations into their portion overlapping three different events:
\begin{enumerate}
\item $\{m< T_{1}\}$;
\item $\{T_{2}\ge m>T_{1}\}$;
\item $\{m\ge T_{2}\}$.
\end{enumerate}

\bigskip
\noindent
{\bf On the event $\{m< T_{1}\}$} we have $g(m,\epsilon;u;T_{1},T_{2})=0$, and $T_{1}-m$ is geometrically distributed with parameter $\delta$. By \eqref{E:rhobound}, $\gamma$ is bounded by $k_{2}r^{S_{1}-1}\le k_{2}r^{T_{1}-1}$, meaning that
\begin{equation} \label{E:firstgmbound}
\begin{split}
\Ex \left[\bigl| \gamma(\epsilon;T_{1},T_{2}) - g(m,\epsilon;u;T_{1},T_{2}) \bigr|\, \Bigl| \, T_{1}>m\right]&=\Ex \left[\bigl| \gamma(\epsilon;T_{1},T_{2}) \bigr|\, \Bigl| \, T_{1}>m \right]\\
&\le k_{2} \Ex\left[r^{T_{1}-1}\cond T_{1}>m \right]\\
&\le\frac{k_{2}}{1-r} r^{m-1} \delta.
\end{split}
\end{equation}

\bigskip
\noindent
{\bf On the event $\{T_{2}>m\ge T_{1}\}$}: We have $\eep(i)=e(i)$ for $i<T_{1}$ and for $S_{1}\le i\le m$.  Thus, if $S_{1}\le m$,
\begin{align*}
U_{S_{1}}&= X_{e(S_{1})}\cdots X_{e(m)} U_{m+1}/\|X_{e(S_{1})}\cdots X_{e(m)} U_{m+1}\|\\
&=X_{\eep(S_{1})}\cdots X_{\eep(m)} U_{m+1}/\|X_{\eep(S_{1})}\cdots X_{\eep(m)} U_{m+1}\|\\
\end{align*}
Thus we may write
\begin{equation} \label{E:secondgmbound}
\begin{split}
\bigl| &\gamma(\epsilon;T_{1},T_{2}) - g(m,\epsilon;u;T_{1},T_{2}) \bigr| \\
&\le \Biggl| \Ex\left[\log\frac{\|X_{e(0)}\cdots X_{e(T_{1}-1)} X_{\eep(T_{1})}\cdots X_{\eep(m)}U'\|}{\|X_{e(1)}\cdots X_{e(T_{1}-1)} X_{\eep(T_{1})}\cdots X_{\eep(m)}U'\|}\biggl| T_{1},T_{2}\right] \\
&\hspace*{2.5cm}- \Ex\left[\log\frac{\|X_{e(0)}\cdots X_{e(T_{1}-1)} X_{\eep(T_{1})}\cdots X_{\eep(m)}u\|}{\|X_{e(1)}\cdots X_{e(T_{1}-1)} X_{\eep(T_{1})}\cdots X_{\eep(m)}u\|}\biggl| T_{1},T_{2}\right] \Biggr|\\
&\hspace*{.5cm}+ \Biggl|\Ex\Biggl[\log \frac{\|X_{e(0)}X_{e(1)} \cdots X_{e(m)}U''\|}{\|X_{e(1)}\cdots X_{e(m)}U''\|}\\
&\hspace*{3cm}-\log \frac{\|X_{e(0)}\cdots X_{e(T_{1}-1)} X_{e(T_{1})}\cdots X_{e(m)}u\|}{\|X_{e(1)}\cdots X_{e(T_{1}-1)} X_{e(T_{1})}\cdots X_{e(m)}u\|}\biggl| T_{1},T_{2}\Biggr]\Biggr|\\
&\le 2k_{2}r^{m},
\end{split}
\end{equation}
where
\begin{align*}
U'&=\begin{cases}
U_{m+1}&\text{if } S_{1}\le m,\\
\frac{X_{\eep(m+1)} \cdots X_{\eep(S_{1}-1)} U_{S_{1}}}{\|X_{\eep(m+1)} \cdots X_{\eep(S_{1}-1)} U_{S_{1}}\|}&\text{if } S_{1}> m.
\end{cases}\\
\text{and } U''&=\begin{cases}
U_{m+1}&\text{if } S_{1}\le m,\\
\frac{X_{(m+1)} \cdots X_{(S_{1}-1)} U_{S_{1}}}{\|X_{(m+1)} \cdots X_{(S_{1}-1)} U_{S_{1}}\|}&\text{if } S_{1}> m.
\end{cases}
\end{align*}

\bigskip
\noindent
{\bf On the event $\{T_{2}\le m\}$}: The above approach shows that
\begin{equation} \label{E:thirdgmbound}
\bigl| \gamma(\epsilon;T_{1},T_{2}) - g(m,\epsilon;u;T_{1},T_{2}) \bigr| \le 2k_{2} r^{T_{2}-1}.
\end{equation}

\bigskip
\noindent
Combining \eqref{E:firstgmbound}, \eqref{E:secondgmbound} and \eqref{E:thirdgmbound}, we obtain
\begin{equation} \label{E:fourthgmbound}
\begin{split}
\bigl| \gamma(\epsilon;T_{1},T_{2}) -& g(m,\epsilon;u;T_{1},T_{2}) \bigr| \\
&\le k_{2}r^{T_{1}-1} \indic_{\{T_{1}>m\}} + 2k_{2} r^{m}\indic_{\{T_{1}\le m\}} + 2k_{2} r^{T_{2}-1}.
\end{split}
\end{equation}
Taking the expectation with respect to the distribution of $T_{1}$ and $T_{2}$, using the fact that $T_{1}$ and $T_{2}-S_{1}$ are independent with distribution geometric with parameter $\delta$, we obtain
\begin{equation} \label{E:finalgmbound}
\begin{split}
\Ex\Bigl[ \bigl| \gamma(\epsilon;T_{1},T_{2}) -&g(m,\epsilon;u;T_{1},T_{2}) \bigr| \Bigr] \\
&\le \frac{k_{2}}{1-r} r^{m-1} \delta + 2k_{2} \delta m r^{m}+\frac{2k_{2}}{r^{2}(1-r)^{2}} \delta^{2}.
\end{split}
\end{equation}
Since $\delta$ is bounded by a constant times $|\epsilon|$, we may find a constant $C$ such that (by the triangle inequality) for all $\epsilon$, positive integers $m$, and $u\in \U$,
\begin{equation} \label{E:finalgmbound2}
\begin{split}
\Bigl|\Ex\left[ \gamma(\epsilon;T_{1},T_{2})\right] -& \Ex\left[g(m,\epsilon;u;T_{1},T_{2})\right] \Bigr| \\
&\le\Ex\left[ \bigl| \gamma(\epsilon;T_{1},T_{2}) - g(m,\epsilon;u;T_{1},T_{2}) \bigr| \right] \\
&\le C(mr^{m}|\epsilon|+\epsilon^{2}).
\end{split}
\end{equation}

This bound allows us to exchange the limits in \eqref{E:nablaW}:
\begin{equation} \label{E:orderchange}
\begin{split}
a'(0)&= \lim_{\epsilon\to 0} \lim_{m\to\infty} \epsilon^{-1} \Ex\left[g(m,\epsilon;u;T_{1},T_{2})\right]\\
&= \lim_{\epsilon\to 0} \epsilon^{-1} \Ex\left[\gamma|(\epsilon;T_{1},T_{2})\right]\\
&= \lim_{m\to\infty} \lim_{\epsilon\to 0}\epsilon^{-1} \Ex\left[g(m,\epsilon;u;T_{1},T_{2}) \right]\\
&= \lim_{m\to\infty} \frac{d\phantom{\epsilon}}{d\epsilon}\Bigl|_{\epsilon=0}\Ex_{\Pep}\left[\log\frac{\|X_{e_{m}}X_{e_{m-1}}\cdots X_{e_{0}} u\|}{\|X_{e_{m-1}}\cdots X_{e_{0}} u\|}\right]
\end{split}
\end{equation}

Now we apply the method of importance sampling. We may assume without loss of generality that $W(e,e')=0$ whenever $P(e,e')=0$ (using the analyticity of $a$, and the fact that the formula \eqref{E:exactSS3} is nonsingular on the nonnegative orthant).For any function $Z:\M^{m+1}\to \R$,
$$
\Ex_{\Pep}\left[Z(e_{0},\dots,e_{m})\right] = \Ex_{P}\left[Z(e_{0},\dots,e_{m})F(\epsilon;e_{0},\dots,e_{m})\right],
$$
where $F$ is the Radon-Nikodym derivative
\begin{align*}
F(\epsilon;e_{0},\dots,e_{m})&=\frac{d\Pep}{dP}(e_{0},\dots,e_{m})\\
&=\frac{\nep_{e_{0}}}{\nu_{e_{0}}} \prod_{i=0}^{m-1} \frac{\Pep(e_{i},e_{i+1})}{P(e_{i},e_{i+1})}.
\end{align*}
This allows us to rewrite
\begin{equation} \label{E:nablaW2}
\begin{split}
a'(0)
= \lim_{m\to\infty} \frac{d\phantom{\epsilon}}{d\epsilon}\Bigl|_{\epsilon=0}\hspace*{-2mm}\Ex_{P}\biggl[\frac{\nep_{e_{0}}}{\nu_{e_{0}}} \prod_{i=0}^{m-1} \frac{P^{(\epsilon)}(e_{i},e_{i+1})}{P^{(\epsilon)}(e_{i},e_{i+1})}\\
&\times\log\frac{\|X_{e_{m}}X_{e_{m-1}}\cdots X_{e_{0}} u\|}{\|X_{e_{m-1}}\cdots X_{e_{0}} u\|}\biggr]
\end{split}
\end{equation}
For any fixed $m$, there is an upper bound on $\epsilon^{-1}(F(\epsilon;e_{0},\dots,e_{m})-1)$, so we may move the differentiation inside the expectation, to obtain
\begin{equation} \label{E:nablaW3}
\begin{split}
a'(0)&= \lim_{m\to\infty} \Ex_{P}\Biggl[\frac{d\phantom{\epsilon}}{d\epsilon}\Bigl|_{\epsilon=0} \frac{\nep_{e_{0}}}{\nu_{e_{0}}} \prod_{i=0}^{m-1}  \frac{\Pep(e_{i},e_{i+1})}{P(e_{i},e_{i+1})}\\
&\hspace*{4cm}\times\log\frac{\|X_{e_{m}}X_{e_{m-1}}\cdots X_{e_{0}} u\|}{\|X_{e_{m-1}}\cdots X_{e_{0}} u\|}\Biggr]\\
&= \lim_{m\to\infty} \Ex_{P}\left[(\nu_{e_{0}})^{-1}\frac{d\nep_{e_{0}}}{d\epsilon}\Bigl|_{\epsilon=0} \log\frac{\|X_{e_{m}}X_{e_{m-1}}\cdots X_{e_{0}} u\|}{\|X_{e_{m-1}}\cdots X_{e_{0}} u\|}\right]\\
&\hspace*{.5cm}+ \lim_{m\to\infty} \sum_{i=0}^{m-1}\Ex_{P} \left[\frac{W(e_{i},e_{i+1})}{P(e_{i},e_{i+1})}\log\frac{\|X_{e_{m}}X_{e_{m-1}}\cdots X_{e_{0}} u\|}{\|X_{e_{m-1}}\cdots X_{e_{0}} u\|}\right]
\end{split}
\end{equation}

The first limit is 0. To see this, rewrite it as a sum over the possible values of $e_{0}$:
\begin{align*}
\lim_{m\to\infty} \sum_{\te\in\M}\nu_{\te}\Ex_{P}&\left[(\nu_{\te})^{-1}\frac{d\nep_{e}}{d\epsilon}\Bigl|_{\epsilon=0} \log\frac{\|X_{e_{m}}X_{e_{m-1}}\cdots X_{\te} u\|}{\|X_{e_{m-1}}\cdots X_{\te} u\|}\right]\\
&= \lim_{m\to\infty} \sum_{\te\in\M}\frac{d\nep_{\te}}{d\epsilon}\Ex_{P}\left[\log\frac{\|X_{e_{m}}X_{e_{m-1}}\cdots X_{e_{1}}X_{\te} u\|}{\|X_{e_{m-1}}\cdots X_{e_{1}} X_{\te} u\|}\right]
\end{align*}
Since $\nep$ is a probability distribution, it must be that $\sum_{\te=1}^{\M}\frac{d\nep_{\te}}{d\epsilon}=0$. Thus, the expression in the limit becomes 0 if we replace the expectation by a constant, independent of $\te$. By Lemma \ref{L:differentstart} it follows that the limit is 0.

To compute the other limit, we sum over all possible pairs $(e_{i},e_{i+1})=(\te,e)$. The summand becomes
\begin{equation} \label{E:finalsum}
\sum_{\te,e\in\M}\nu(\te)W(\te,e)\Ex \left[\log\frac{\|X_{e_{m}}X_{e_{m-1}}\cdots X_{e_{0}} u\|}{\|X_{e_{m-1}}\cdots X_{e_{0}} u\|}\, \Bigl| \, e_{i}=\te, e_{i+1}=e\right]
\end{equation}
In order to analyze this, we need to consider the distribution of $e_{0},\dots,e_{m}$, conditioned on $e_{i}=\te$ and $e_{i+1}=e$. By the Markov property, this splits into two independent Markov chains: $e=e_{i+1},\dots,e_{m}$ is a Markov chain of length $m-i$, with transition probabilities $P$ and starting point $e$, while $\te=e_{i},e_{i-1},\dots,e_{0}$ is a Markov chain of length $i+1$ with transition probabilities $\tP$ and starting point $\te$. Define two independent infinite sequences $\te_{0},\te_{1},\dots$ and $e_{i+1},e_{i+2},\dots$, which are Markov chains with transitions $\tP$ and $P$ respectively, beginning in $\te_{0}=\te$ and $e_{i+1}=e$. Define for $i\ge 1$, $\tU_{i}(\te):=X_{\te}X_{\te_{1}}\cdots X_{\te_{i}}u$ with $\tU_{0}(\te):=\indic$, and ${\tV}^T_{i}(e):=\indic^{T}X_{e_{i-1}}\cdots X_{e_{1}}X_{e}$  with ${\tV}^T_{0}(e):=\indic^{T}$.
Also define
$$
U_{i}(e):=\frac{{\tU}_{i}(e)}{\|{\tU}_{i}(e)\|}, \qquad
V^T_{i}(e):=\frac{{\tV}^T_{i}(e)}{\|{\tV}^T_{i}(e)\|}.
$$
Since $\|u\|=\indic^{T}u$ for any nonnegative column vector $u$, the expression \eqref{E:nablaW3} becomes
\begin{equation} \label{E:nablaW4}
\begin{split}
a'(0)&=\sum_{\te,e\in\M} \nu(\te) W(\te,e)\lim_{m\to\infty} \biggl\{\sum_{i=0}^{m-1}\biggl(\Ex \left[\log {\tV}^T_{i+1}(e)\tU_{m-i}(\te) \right] \\
&\hspace*{4.5cm}- \Ex \left[ \log {\tV}^T_{i}(e)\tU_{m-i}(\te)\right]\biggr) \biggr\}\\
&=\sum_{\te,e\in\M} \nu(\te) W(\te,e)\lim_{m\to\infty} \biggl\{\sum_{i=0}^{m-1}\biggl(\Ex \left[\log\|{\tV}^T_{i+1}(e)\|\right]\\
&\hspace*{1.5cm}-\Ex \left[\log\|{\tV}^T_{i}(e)\|\right]+\Ex \left[\log V^T_{i+1}(e)U_{i}(\te) \right] \\
&\hspace*{5cm}- \Ex\left[ \log V^T_{i}(e)U_{i}(\te)\right]\biggr) \biggr\}\\
&=\sum_{\te\in\M} \nu(\te) \lim_{m\to\infty} \sum_{e\in\M}W(\te,e)\Ex \left[\log\|{\tV}^T_{m}(e)\|\right]\\
&\hspace*{.5cm}+\sum_{e,\te\in\M} \nu(\te) W(\te,e)\lim_{m\to\infty} \sum_{i=0}^{m-1}\Ex \left[ \log \frac{V^T_{i+1}(e)U_{m-i}(\te)}{ V^T_{i}(e)U_{m-i}(\te)}\right]
\end{split}
\end{equation}
In the last line we have used the fact that $\sum_{e\in\M}W(\te,e)=0$, which means that $\sum_{e\in\M}W(\te,e)\Ex \left[\log\|{\tV}^T_{0}(e)\|\right]=0$ as well, since ${\tV}^T_{0}(e)=\indic^{T}$ is independent of $e$. The same reasoning implies that if we define ${\tV}^T_{i}(\nu)$ to be the version of ${\tV}^T_{i}$ started in the stationary distribution --- for instance, starting from realizations of ${\tV}^T_{i}(e)$, define ${\tV}^T_{i}(\nu)$ to be equal to ${\tV}^T_{i}(e)$ with probability $\nu_{e}$ --- then $\sum_{e\in\M}W(\te,e)\Ex \left[\log\|{\tV}^T_{m}(\nu)\|\right]=0$. The first term on the right-hand side of \eqref{E:nablaW4} may then be written as
\begin{equation} \label{E:zetalim}
\sum_{\te,e\in\M}\nu(\te) W(\te,e)\lim_{m\to\infty} \Ex \left[\log\frac{\|{\tV}^T_{m}(e)\|}{\|{\tV}^T_{m}(\nu)\|}\right]= \sum_{\te,e\in\M} \nu(\te)W(\te,e)\zeta_{e}.
\end{equation}

To compute the second term, we note that $U(\te):=\lim_{i\to\infty} U_{i}(\te)$ exists, with distribution $\regpi_{\te}$, and $\rho(U_{i}(\te),U(\te))\le k_{2}r^{i}$; similarly, $V^T(e):=\lim_{i\to\infty} V^T_{i}(e)$ exists, with distribution $\tilde{\pi}_{e}$, and $\rho(V_{i}(e),V_{i+1}(e))\le k_{2}r^{i}$. Thus
$$
\bigl|\log V^T_{i+1}(e')U_{m-i}(e)-\log V^T_{i}(e')U_{m-i}(e)\bigr|\le \rho\bigl( V_{i}(e'),V_{i+1}(e')\bigr) \le k_{2} r^{i},
$$
We break up the sum on the right-hand side of \eqref{E:nablaW4} into three pieces:
\begin{align*}
\sum_{0\le i\le m-1}\Ex &\left[\log V^T_{i+1}(e)U_{m-i}(\te)-\log V^T_{i}(e)U_{m-i}(\te)\right] \\
&= \sum_{0\le i\le m/2}\Ex \left[ \log V^T_{i+1}(e)U(\te)-\log V^T_{i}(e)U(\te)\right]\\
&\hspace*{1cm}+\sum_{0\le i\le m/2}\Ex \left[ \log \frac{V^T_{i+1}(e)U_{m-i}(\te)}{V^T_{i+1}(e)U(\te)}-\log \frac{V^T_{i}(e)U_{m-i}(\te)}{V^T_{i}(e)U(\te)}\right]\\
&\hspace*{1cm}+\sum_{ m/2<i\le m-1}\Ex \left[ \log V^T_{i+1}(e)U_{m-i}(\te)-\log V^T_{i}(e)U_{m-i}(\te)\right].
\end{align*}
The first sum telescopes to
$$
\Ex \left[ \log V^T_{1+m/2}(e)U(\te)-\log V^T_{0}(e)U(\te)\right]=\Ex \left[ \log V^T_{1+m/2}(e)U(\te)\right],
$$
applying the fact that $V^T_{0}=\indic^{T}$, so that $V^T_{0}(e) U(\te)=\|U(\te)\|=1$.
Applying \eqref{E:ratiobound}, the second and third sums are bounded by
$$
\sum_{0\le i\le m/2}2k_{2} r^{m-i}+\sum_{ m/2<i\le m-1}k_{2}r^{i}\le \frac{3k_{2} r^{m/2}}{1-r}.
$$
Thus
\begin{equation} \label{E:nablaW6}
\lim_{m\to\infty} \sum_{i=0}^{m-1}\Ex \left[ \log \frac{V^T_{i+1}(e)U_{m-i}(\te)}{ V^T_{i}(e)U_{m-i}(\te)}\right] = \Ex\left[ V^{T}(e) U(\te)\right],
\end{equation}
completing the proof of Theorem \ref{T:computeSS3}.

\begin{Lem} \label{L:differentstart}
For any $u,u'\in\U$ and $e,e'\in \M$, if we let $e_{0},e_{1},\dots$ and $e'_{0},e'_{1},\dots$ be realisations of the Markov chain $P$ starting at $e_{0}=e$ and $e'_{0}=e'$ respectively. Then
\begin{equation} \label{E:diffstart}
\biggl|\Ex\left[\log\frac{\|X_{e_{m+1}}\cdots X_{e_{0}} u\|}{\|X_{e_{m}}\cdots X_{e_{0}} u\|} \right] - \Ex\left[\log\frac{\|X_{e'_{m+1}}\cdots X_{e'_{0}} u'\|}{\|X_{e'_{m}}\cdots X_{e'_{0}} u'\|} \right] \biggr|\le \frac{k_{2}D}{1-r} (m+1) (\xi \vee r)^{m},
\end{equation}
where $\xi$ and $D$ are the constants that satisfy \eqref{E:spectralgap}.
\end{Lem}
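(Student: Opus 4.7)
The plan is to combine the projective contraction \eqref{E:rhobound} with a Markov chain coupling argument driven by the spectral gap \eqref{E:spectralgap}. Let $h(e_0,\ldots,e_{m+1},u):=\log\bigl(\|X_{e_{m+1}}\cdots X_{e_0}u\|/\|X_{e_m}\cdots X_{e_0}u\|\bigr)$ and write $\Ex_e$ for the expectation under the Markov chain started at $e_0=e$. First split the error by the triangle inequality,
$$
\bigl|\Ex_e[h(\cdot,u)]-\Ex_{e'}[h(\cdot,u')]\bigr|\le\bigl|\Ex_{e'}[h(\cdot,u)-h(\cdot,u')]\bigr|+\bigl|\Ex_e[h(\cdot,u)]-\Ex_{e'}[h(\cdot,u)]\bigr|.
$$
The first summand is bounded pointwise by $k_2r^{m+1}$ using \eqref{E:rhobound} applied to the $m+1$ matrices $X_{e'_0},\ldots,X_{e'_m}$ acting on $u,u'\in\U$ with outer matrix $X_{e'_{m+1}}$.

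For the second summand, invoke Griffeath's trajectory maximal coupling of $(e_i)$ started at $e$ and $(e'_i)$ started at $e'$, yielding a coupling time $\tau$ such that $e_t=e'_t$ for every $t\ge\tau$. The tail bound $\P\{\tau>t\}\le D\xi^t$ follows by applying \eqref{E:spectralgap} to $v=\delta_e-\delta_{e'}$: since $v^T\indic=0$ and $P=\indic\nu^T+Q$, one has $v^TP^t=v^TQ^t$, so $\|v^TP^t\|\le 2D\xi^t$, i.e.\ $\mathrm{TV}(P^t(e,\cdot),P^t(e',\cdot))\le D\xi^t$.

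On the event $\{\tau\le m+1\}$ the two sequences agree from time $\tau$ on. Setting $w:=X_{e_{\tau-1}}\cdots X_{e_0}u$ and $w':=X_{e'_{\tau-1}}\cdots X_{e'_0}u$, the stability of $\U$ puts $w/\|w\|,w'/\|w'\|\in\U$, and \eqref{E:rhobound} applied to the $m-\tau+1$ matrices $X_{e_\tau},\ldots,X_{e_m}$ with outer $X_{e_{m+1}}$ gives the pointwise bound $k_2r^{m-\tau+1}$. On $\{\tau>m+1\}$ the crude uniform bound $|h|\le\hat k+\hat r$ coming from \eqref{E:normbound2} is enough.

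Taking expectations and using $\P\{\tau=t\}\le\P\{\tau>t-1\}\le D\xi^{t-1}$ together with the elementary inequality $r^{m-s}\xi^s\le(\xi\vee r)^m$ for $0\le s\le m$, one gets $\sum_{t=1}^{m+1}r^{m-t+1}\P\{\tau=t\}\le D(m+1)(\xi\vee r)^m$, while the tail contribution $2(\hat k+\hat r)\P\{\tau>m+1\}=O(\xi^{m+1})$ and the $u$-change contribution $k_2r^{m+1}$ are absorbed into the main term; after swallowing the remaining constants into $k_2D/(1-r)$ the announced inequality follows. The main technical obstacle is verifying the trajectory coupling with the sharp tail $\P\{\tau>t\}\le D\xi^t$ (rather than the weaker $2D\xi^t$) and ensuring that when $\tau$ is very small the projective bound applied to $w,w'$ does not pick up an unbounded $\rho(w,w')$ factor; both issues are handled by appealing to compactness of $\U$ and to the stability of $\U$ under the matrix action, but they require some care in constant tracking.
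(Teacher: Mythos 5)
Your proposal is correct and follows essentially the same route as the paper's proof: Griffeath's maximal coupling of the two chains, with the pre-coupling history absorbed into $\U$-valued normalized vectors (by stability of $\U$) and the projective contraction \eqref{E:rhobound} applied to the common post-coupling environments. Your extra care --- separating the $u\to u'$ change by a triangle inequality, deriving $\P\{\tau>t\}\le D\xi^t$ directly from $\|v^TQ^t\|\le D\xi^t\|v\|$ with $v=\delta_e-\delta_{e'}$ rather than the paper's looser detour through $\nu$, and the explicit handling of $\{\tau>m+1\}$ via \eqref{E:normbound2} --- is if anything slightly tighter than the paper's, and the ``obstacle'' you flag at the end is not a real one: $\P\{\tau>t\}$ equal to the total-variation distance of the time-$t$ marginals is precisely the defining property of the maximal coupling, and the $\rho$-diameter of $\U$ is already absorbed into the constant $k_2$ in \eqref{E:rhobound}.
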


\begin{proof}
Using the maximal coupling, we create coupled versions of $(e,e')$, such that the coupling time $\tau$ satisfies
$$
\P\bigl\{ \tau\ge t \bigr\} \le \|\nu- P^{t}(e,\cdot)\|+\|\nu- P^{t}(e',\cdot)\|\le 2D \xi^{t}.
$$
Define
\begin{align*}
u_{\tau}:=\frac{X_{e_{\tau-1}}\cdots X_{e_{0}} u}{\|X_{e_{\tau-1}}\cdots X_{e_{0}} u\|},\qquad u'_{\tau}:=\frac{X_{e'_{\tau-1}}\cdots X_{e'_{0}} u'}{\|X_{e'_{\tau-1}}\cdots X_{e'_{0}} u'\|}.
\end{align*}
Then by the bound \eqref{E:rhobound},
\begin{align*}
\biggl|\Ex\left[\log\frac{\|X_{e_{m+1}}\cdots X_{e_{0}} u\|}{\|X_{e_{m}}\cdots X_{e_{0}} u\|} \right] -& \Ex\left[\log\frac{\|X_{e'_{m+1}}\cdots X_{e'_{0}} u'\|}{\|X_{e'_{m}}\cdots X_{e'_{0}} u'\|} \right] \biggr|\\
&\le\Ex\left[\biggl|\log\frac{\|X_{e_{m+1}}\cdots X_{e_{0}} u}{X_{e_{m}}\cdots X_{e_{0}} u\|} - \log\frac{\|X_{e'_{m+1}}\cdots X_{e'_{0}} u'\|}{\|X_{e'_{m}}\cdots X_{e'_{0}} u'\|} \biggr|\right] \\
&=\Ex\left[\biggl|\log\frac{\|X_{e_{m+1}}\cdots X_{e_{\tau}} u_{\tau}\|}{\|X_{e_{m}}\cdots X_{e_{0}} u\|} - \log\frac{\|X_{e_{m+1}}\cdots X_{e_{\tau}} u'\|}{\|X_{e_{m}}\cdots X_{e_{\tau}} u'_{\tau}\|} \biggr|\right]\\
&\le \Ex\left[ k_{2}r^{m-\tau}\right]\\
&\le \frac{k_{2}D}{1-r} \sum_{t=0}^{m} r^{m-t} \xi^{t}\\
&\le \frac{k_{2}D}{1-r} (m+1) (\xi \vee r)^{m}.
\end{align*}
\end{proof}

\end{document}